\newtheorem{remark}{Remark}
\newtheorem{proposition}{Proposition}
\newenvironment{proof}{{\hspace{0.3cm}\it Proof:}\hspace{0.15cm}}{\par}
\begin{document}
%
\title{Large-scale Aerial Reconfigurable Intelligent Surface-aided Robust Anti-jamming Transmission}

\author{Junshan~Luo, Shilian~Wang, and~Boxiang He
	\thanks{The authors are with the College of Electronic Science and Technology, National University of Defense Technology, Changsha 410005, China (e-mail: luojunshan10@nudt.edu.cn, wangsl@nudt.edu.cn, boxianghe1@bjtu.edu.cn)}
	\thanks{Junshan Luo is also with the Sixty-Third Research Institute, National University of Defense Technology, Nanjing 210007, China.}
}
\maketitle

\begin{abstract}
Aerial reconfigurable intelligent surfaces (ARIS), deployed on unmanned aerial vehicles (UAVs), could enhance anti-jamming communication performance by dynamically configuring channel conditions and establishing reliable air-ground links. 
However, large-scale ARIS faces critical deployment challenges due to the prohibitive computational complexity of conventional discrete optimization methods and sophisticated jamming threats. 
In this paper, we introduce a mean field modeling approach to design the spatial configuration of ARIS by a continuous density function, thus bypassing high-dimensional combinatorial optimization.
We consider an adaptive jammer which adjusts its position and beamforming to minimize the sum-rate.
A key finding reveals that the jammer's optimal strategy is governed by a proximity-directivity trade-off between reducing path loss and enhancing spatial focusing.
To combat the jamming, we propose a robust anti-jamming transmission framework that jointly optimizes the BS beamforming, the ARIS reflection, and the ARIS spatial distribution to maximize the worst-case sum-rate. 
By leveraging variational optimization and Riemannian manifold methods, we efficiently solve the functional optimization problems.
Our analysis further unveils that the optimal ARIS deployment follows a spatial water-filling principle, concentrating resources in high-gain regions while avoiding interference-prone areas. 
Simulation results demonstrate that the proposed framework remarkably improves the sum-rate.
Furthermore, the computational complexity of the proposed algorithm is independent of the number of UAVs, validating its effectiveness for scalable ARIS-assisted anti-jamming communications.
\end{abstract}

\begin{IEEEkeywords}
Aerial reconfigurable intelligent surface, anti-jamming communication, mean-field theory, robust optimization.
\end{IEEEkeywords}

\section{Introduction}
Ensuring reliable wireless communications in adversarial environments poses a critical challenge to modern information networks \cite{Pirayesh_H2022}. Jamming attacks, intentionally designed to disrupt legitimate transmissions, can severely compromise system throughput, latency, and security \cite{Darsena_D2022}. As such, robust anti-jamming techniques are indispensable to safeguard next-generation wireless systems against increasingly sophisticated threats.

In this context, reconfigurable intelligent surface (RIS) has emerged as a revolutionary technology to proactively defend against jamming threats by intelligently reshaping the wireless environment \cite{Renzo_DM2020,Wu_Q2021}. 
Composed of a large number of passive meta-atoms, an RIS can dynamically manipulate the phase, amplitude, and polarization of incident electromagnetic waves, thereby constructing favorable signal pathways for legitimate users and creating nulls towards jammers without the need for expensive radio frequency chains or high power consumption \cite{Huang_C2020}.
Moving a step further, deploying RIS on unmanned aerial vehicles (UAVs) to form aerial RIS (ARIS) unlocks even greater potential. 
ARIS combines the channel reconfiguration capability of RIS with the high mobility and superior line-of-sight (LoS) conditions of UAVs, enabling on-demand and enhanced coverage in certain areas \cite{Mahmoud_A2021}. 
This makes ARIS particularly promising for rapid deployment in temporary hotspots, emergency communications, and adversarial scenarios where jamming is prevalent.

Recognizing this potential, several studies have explored the application of RIS and ARIS for anti-jamming communications.
Early works showed that optimizing the transmit power and RIS reflection under unknown, adaptive jammers can substantially improve transmission resilience, where fast reinforcement learning was used to track jammer dynamics \cite{Yang_H2021}.
Beyond single-surface settings, the work in \cite{Zhang_M2024,Dong_H2024,ElMossallamy_2021} further demonstrated that RIS can shape nulls toward jammers and steer constructive paths to users across various network models such as wireless powered networks \cite{Zhang_M2024}, integrated terrestrial-satellite networks \cite{Dong_H2024}, device-to-device (D2D) networks \cite{ElMossallamy_2021}, etc.
To cope with strong adversaries, beamforming against jamming has also been developed for advanced RIS variants such as multi-layer RIS \cite{Sun_Y2024,Zou_C2023}, simultaneous transmitting and reflecting RIS \cite{Zhou_T2024}, showing improved anti-jamming gains versus conventional RIS.
Going airborne, ARIS leverages UAV mobility to place the reflecting array at positions that optimally mitigate the impact of jamming signals, coupling passive beamforming with placement or trajectory design.
The work in \cite{Tang_X2021} formulated joint passive beamforming and ARIS deployment, and verified significant jamming suppression. 
Subsequent works have extended ARIS to dynamic multi-cell multi-user scenarios, where joint optimization of BS beamforming, RIS phase shifts, user scheduling, and UAV trajectory leads to improved minimum ergodic user rates \cite{Liu_J2024}. 
In maritime contexts, the authors of \cite{Yang_H2024_1} introduced an energy-harvesting ARIS system using deep reinforcement learning to jointly optimize transmit power, ARIS placement, and RIS beamforming, which demonstrated enhanced energy efficiency and endurance under jamming threats in maritime channels.
In D2D networks, \cite{Hou_Z2023} proposed a matching-based joint ARIS selection and passive beamforming mechanism across multiple ARIS units, improving anti-jamming performance via transmitter-ARIS matching in anti-jamming D2D communications.
Moreover, recent advances have explored UAV swarm enabled ARIS architectures, which overcome the limitations of single-UAV payload and enhance aperture gain, spatial multiplexing, and robustness through cooperative deployment and beamforming design \cite{Shang_B2023}.

However, the extension of these approaches to large-scale ARIS networks, which comprise swarms of RIS-mounted drones, remains particularly challenging due to the prohibitive computational complexity of optimizing a massive number of variables including UAV positions and RIS phase shifts.
To tackle the inherent complexity of optimizing large-scale drone swarms, mean field theory (MFT) offers a powerful mathematical framework by transforming the intractable discrete optimization over numerous individual agents into a tractable continuous problem via a spatial density function \cite{Wang_D2024,Jiang_J2025}. 
Instead of optimizing the precise location of each UAV, MFT characterizes the collective deployment of the ARIS system through a density distribution dramatically reducing the algorithmic complexity and providing crucial analytical insights. This methodology has found successful applications in large-scale wireless networks, such as power control in multiple access systems \cite{Benamor_A2022} and energy-efficient velocity control in massive UAV networks \cite{Gao_H2022}, where it demonstrates superior scalability and efficiency. 

Despite the promising advances of ARIS and MFT, several research gaps exist.
First, most existing studies on ARIS-assisted anti-jamming communications have been limited to small-scale scenarios, with little emphasis on the scalability and resilience of massive ARIS networks under adversarial jamming conditions. Second, while MFT has demonstrated effectiveness in resource allocation tasks such as power control and trajectory optimization, its potential for the joint optimization of UAV spatial deployment and RIS phase shifts in adversarial environments has yet to be fully explored. Third, as adaptive jammers are highly dynamic and their positions may be uncertain or deliberately concealed, which significantly complicates interference management in ARIS-assisted networks. Although recent works have incorporated uncertainty in the jamming channel, they largely neglect a rigorous treatment of the optimal jammer location problem.
To the best of our knowledge, a joint optimization framework that is robust to jammer location uncertainty in the context of MFT-based optimization for large-scale ARIS systems has not been investigated.

To bridge these gaps, this paper proposes a MFT-based framework for robust anti-jamming transmission in large-scale ARIS-assisted systems. The main contributions of this work are summarized as follows:
\begin{itemize}
\item \textit{Mean field modeling for ARIS deployment:} We propose the mean-field approach to model the spatial configuration of a large-scale ARIS system using a continuous density function, which decouples the computational complexity from the ARIS swarm size, enabling efficient optimization for massive ARIS networks.
\item \textit{Robust worst-case formulation against an adaptive jammer:} We consider an adaptive jammer that optimizes its transmission strategy, including both beamforming and position selection, to minimize the system sum-rate. A key insight from this formulation is that the proximity-directivity trade-off governs the jammer's optimal strategy. Our formulation jointly optimizes the base station (BS) beamforming, ARIS reflection coefficients, and ARIS spatial density to maximize the worst-case sum-rate under this jammer uncertainty.
\item \textit{An efficient optimization framework:} We solve the formulated problem based on variational optimization and Riemannian manifold. Specifically, the non-convex unit-modulus constraints for the RIS phases are rigorously handled via Riemannian gradient ascent on a complex circle manifold.
The variational optimization on ARIS spatial deployment shows to follow a spatial water-filling principle. In particular, we propose a density-tuned aerial resource allocation (DT-ARA) algorithm that allocates ARISs to locations where their net marginal gain exceeds a threshold, analogous to power allocation over parallel channels.
\item \textit{Comprehensive performance validation:} Through extensive simulations, we demonstrate that the proposed framework remarkably improves the sum-rate and jamming resilience compared to state-of-the-art benchmarks. The results validate the effectiveness of our approach for scalable and robust ARIS-assisted communications in contested environments.
\end{itemize}

The remainder of this paper is organized as follows. Section \ref{sec:system_model} introduces the system model and problem formulation. Section \ref{sec:jamming_strategy} analyzes the adaptive jammer's strategy. The proposed joint optimization algorithms are detailed in Section \ref{sec:4}. Complexity and convergence analyses are provided in Section \ref{sec:performance}. Numerical results are presented in Section \ref{sec:numerical_results}, followed by conclusions in Section \ref{sec:conclusion}. 

\begin{table}[t]
	\centering
	\caption{Summary of Key Mathematical Notations}
	\label{tab:notations}
	\begin{tabular}{ll}
		\toprule
		\textbf{Parameter} & \textbf{Definition} \\
		\midrule
		$M$ & Number of BS antennas \\
		$N$ & Number of RIS elements per UAV \\
		$Q$ & Total number of UAVs \\
		$K$ & Number of single-antenna users \\
		$L$ & Number of jammer antennas \\
		$\mathcal{D}$ & UAV deployment region, $\mathcal{D} \subset \mathbb{R}^2$ \\
		$\rho(\mathbf{x})$ & UAV spatial density at $\mathbf{x} \in \mathcal{D}$ \\
		$\rho_{\max}$ & Max UAV density constraint \\
		$\hat{\mathbf{j}}$ & Estimated jammer location \\
		$\epsilon$ & Jammer position uncertainty \\
		$\mathcal{J}$ & Jammer uncertain region \\
		$\mathbf{H}_{\textrm{B,U}}(\mathbf{x})$ & BS-ARIS channel, $\mathbf{H}_{\textrm{B,U}}(\mathbf{x}) \in \mathbb{C}^{N \times M}$ \\
		$\mathbf{h}_{\textrm{U},k}(\mathbf{x})$ & ARIS-$k$th user channel, $\mathbf{h}_{\textrm{U},k}(\mathbf{x}) \in \mathbb{C}^{N \times 1}$ \\
		$\mathbf{H}_{\textrm{J,U}}(\mathbf{x}, \mathbf{j})$ & Jammer-ARIS channel, $\mathbf{H}_{\textrm{J,U}}(\mathbf{x}, \mathbf{j}) \in \mathbb{C}^{N \times L}$ \\
		$\bm{\Theta}(\mathbf{x})$ & RIS reflection matrix at $\mathbf{x}$\\
		$\theta_n(\mathbf{x})$ & Phase shift of the $n$th element at $\mathbf{x}$\\
		$\mathbf{v}$ & Jammer beamforming \\
		$\mathbf{w}_k$ & BS beamforming for user $k$ \\
		$P_\text{B}$ & Max BS transmit power \\
		$P_\text{J}$ & Jamming power \\
		$\sigma_k^2$ & Noise at user $k$ \\
		$\mathbf{h}_{\text{eff, B},k}$ & Aggregate BS-$k$th user channel, c.f. \eqref{eq:agg_BS_user} \\
		$h_{\text{eff,J},k}$ & Aggregate jammer-$k$th user channel, c.f. \eqref{eq:agg_jam_user} \\
		$\gamma_k$ & SINR, c.f. \eqref{eq:SINR} \\
		$R_{\text{sum}}$ & Sum-rate, c.f. \eqref{eq:sum_rate} \\
		$\mathbf{b}_k(\mathbf{j})$ & Component of $h_{\text{eff,J},k}$ \\
		$\mathbf{R}(\mathbf{j})$ & Jamming covariance, $\sum_k \mathbf{b}_k(\mathbf{j}) \mathbf{b}_k^\dagger(\mathbf{j})$ \\
		$\lambda_{\max}(\mathbf{R}(\mathbf{j}))$ & Max eigenvalue of $\mathbf{R}(\mathbf{j})$ \\
		$z_{\text{B},k}$ & Signal gain of $k$th user, c.f. \eqref{eq:z_Bk} \\
		$z_{\text{J},k}$ & Jamming gain, c.f. \eqref{eq:z_Jk} \\
		$c_k$ & Interference term, c.f. \eqref{eq:c_k} \\
		$B_k$ & Signal enhancement in gradient, c.f. \eqref{eq:B_k}\\
		$C_k$ & Jamming suppression in gradient, c.f. \eqref{eq:C_k} \\
		$\mathcal{M}$ & RIS phase manifold, c.f. \eqref{eq:RIS_manifold} \\
		$G(\mathbf{x})$ & Net marginal gain function, c.f. \eqref{eq:net_gain_funtion} \\
		$ \eta(\mathbf{x}) $ & An arbitrary test function \\
		\bottomrule
	\end{tabular}
\end{table}

\textit{Notations}: $a$, $\mathbf{a}$, $\mathbf{A}$, and $\mathcal{A}$ denote a scalar, a vector, a matrix, and a set, respectively. $(\cdot)^{\rm{T}}$, $(\cdot)^{\dagger}$, and $(\cdot)^{-1}$ denote transpose, conjugate transpose, and inverse, respectively.
$[x]^+$ denotes the positive part of a real number $x$, defined as $[x]^+ = \mathrm{max}\{ 0, x \}$.
For a vector $\mathbf{a}$, $[\mathbf{a}]_n$ denotes the $n$th element. $\mathcal{CN}(0,\sigma)$ denotes the circularly symmetric complex Gaussian (CSCG) distribution with mean zero and covariance $\sigma$. $\mathbb{R}$ and $\mathbb{C}$ represent the sets of real and complex numbers, respectively. $\mathrm{Re}(\cdot)$, $\mathrm{Im}(\cdot)$, and $|\cdot|$ denote the real part, the imaginary part, and the amplitude of a complex number or complex vector, respectively. $\partial(\cdot)$ denotes the partial differential of a function or the boundary of a set, depending on the context. $\delta(\cdot)$ denotes the functional derivative. Define $\mathcal{I}_N = \left\{1, 2, \ldots, N \right\}$ as a shorthand as the index set. For a set $\mathcal{A}$, $\mathrm{int}(\mathcal{A})$ denotes its interior. 
The key variables of this paper are listed in Table \ref{tab:notations}.

\section{System Model and Problem Formulation} \label{sec:system_model}

As shown in Fig. \ref{fig:system_model}, we consider a downlink communication system comprising of a multi-antenna BS, multiple UAVs, multiple single-antenna users, and a multi-antenna jammer. 
The BS has $M$ antennas and is located at the ground level.
$Q$ UAVs are deployed at fixed altitude, each carrying a RIS with $N$ passive reflecting elements, forming an ARIS swarm.
Their horizontal positions $\mathbf{x} \in \mathcal{D}$, where $\mathcal{D} \subset \mathbb{R}^2$ denotes the feasible region.
The BS is serving $K$ single-antenna terrestrial users via the ARIS swarm.
The malicious jammer with $L$ antennas is located at an uncertain horizontal position $\mathbf{j} \in \mathcal{J}$, where $\mathcal{J} = \{ \mathbf{j} \in \mathbb{R}^2:  \| \mathbf{j} - \hat{\mathbf{j}} \| \leq \epsilon \}$; $\hat{\mathbf{j}}$ is the roughly estimated location and $\epsilon$ denotes the jammer position uncertainty.

\subsection{Mean Field Effective Channels}
For the ARIS at $\mathbf{x}$, the BS-ARIS channel is denoted by $\mathbf{H}_{\textrm{B,U}}(\mathbf{x}) \in \mathbb{C}^{N \times M}$.
For user $k$ and ARIS at $\mathbf{x}$, the ARIS-user channel is denoted by $\mathbf{h}_{\textrm{U},k}(\mathbf{x}) \in \mathbb{C}^{N \times 1}$, where $k \in \mathcal{I}_K$.
The channel between jammer at $\mathbf{j}$ and ARIS at $\mathbf{x}$ is denoted by $\mathbf{H}_{\textrm{J,U}}(\mathbf{x}, \mathbf{j}) \in \mathbb{C}^{N \times L}$.\footnote{Building on the air-ground propagation characteristics, the jammer-ARIS channels can be accurately modeled using LoS-dominant formulations, particularly at millimeter-wave (mmWave) and higher frequency bands. Given the LoS-dominant nature, coarse jammer position estimation $\hat{\mathbf{j}}$ enables sufficient jamming channel reconstruction. The exclusion of BS-user and jammer-user direct links is justified by the following two reasons. First, at mmWave or higher frequencies, ground-level links experience severe diffraction loss and are highly susceptible to blockage by terrestrial infrastructures. Second, the ARIS swarm creates preferential air-ground paths with near-omnidirectional coverage. Therefore, both the BS and the jammer prioritize directing their beams toward the ARISs.}
The ARIS at $\mathbf{x}$ has a reflection matrix $\boldsymbol{\Theta}(\mathbf{x}) = \mathrm{diag} \left\{ \theta_1(\mathbf{x}), \dots, \theta_N(\mathbf{x}) \right\}$,
satisfying the unit-modulus constraint $\left| \theta_n(\mathbf{x}) \right| = 1$, $n \in \mathcal{I}_N$.
Based on the MFT, we describe the spatial distribution of ARIS with continuous density $\rho(\mathbf{x})$, which denotes the density of UAVs at region $\mathbf{x}$,
satisfying $\int_{\mathcal{D}} { \rho(\mathbf{x}) } d\mathbf{x} = Q $ with maximum density constraint to avoid collision, i.e., $\rho(\mathbf{x}) \leq \rho_{\max}$.
Specifically, the aggregate BS to the $k$th user channel via all ARISs, denoted by $\mathbf{h}_{\text{eff,B},k}^{\dagger} \in \mathbb{C}^{1 \times M}$, is expressed as
\begin{equation}\label{eq:agg_BS_user}
\mathbf{h}_{\text{eff,B},k}^\dagger = \int_{\mathcal{D}} \mathbf{h}_{\text{U},k}^\dagger(\mathbf{x}) \boldsymbol{\Theta}(\mathbf{x}) \mathbf{H}_{\text{B,U}}(\mathbf{x}) \rho(\mathbf{x}) \mathrm{d}\mathbf{x}.
\end{equation}

The aggregate jamming channel for the $k$th user $h_{\text{eff,J},k}(\mathbf{j},\mathbf{v}) \in \mathbb{C}$ is given by
\begin{equation}\label{eq:agg_jam_user}
h_{\text{eff,J},k}(\mathbf{j},\mathbf{v}) = \int_{\mathcal{D}} \mathbf{h}_{\text{U},k}^\dagger (\mathbf{x}) \boldsymbol{\Theta}(\mathbf{x}) \mathbf{H}_{\text{J,U}}(\mathbf{x},\mathbf{j}) \mathbf{v} \rho(\mathbf{x}) \mathrm{d} \mathbf{x} 
\end{equation}
where $\mathbf{v} \in \mathbb{C}^{L \times 1}$ is the  beamforming vector of the jammer, satisfying $\| \mathbf{v} \| = 1$.

\subsection{Signal Model and Sum-rate}
\begin{figure}[t] 
	\centering
	\includegraphics[width=0.4\textwidth]{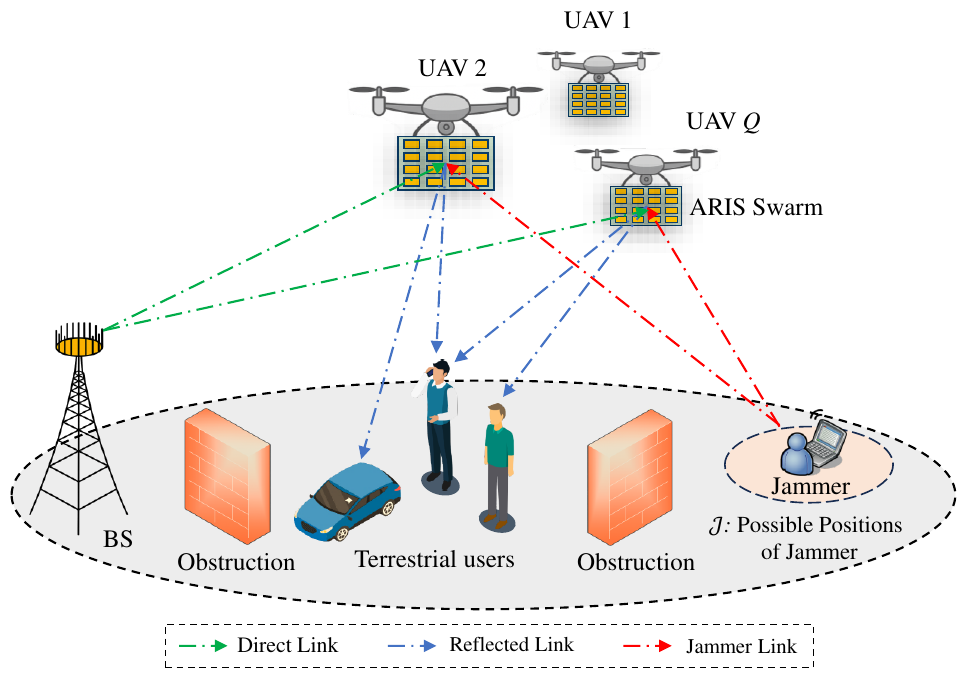} 
	\caption{An illustration of ARIS swarm-aided anti-jamming communications system.} 
	\label{fig:system_model}
\end{figure}
The received signal at user $k$ is 
\begin{equation}\label{eq:y_k}
y_k = \mathbf{h}_{\text{eff,B},k}^\dagger \sum_{k = 1}^{K} \mathbf{w}_k s_{\text{U}, k} + h_{\text{eff,J},k}(\mathbf{j},\mathbf{v}) s_{\text{J}} + n_k
\end{equation}
where $\mathbf{w}_k \in \mathbb{C}^{M \times 1}$ denotes the BS beamforming vector for user $k$; $s_{\text{U},k} \sim \mathcal{CN}(0, 1)$ is the information symbol for user $k$; 
$s_{\text{J}} \sim \mathcal{CN}(0, P_{\text{J}})$ is the jamming signal with power $P_{\text{J}}$; $n_k \sim \mathcal{CN}(0, \sigma_k^2)$ is the noise with power $\sigma_k^2$.
From \eqref{eq:y_k}, the SINR of user $k$ is given by
\begin{equation}\label{eq:SINR}
\gamma_k = \frac{\left| \mathbf{h}_{\text{eff,B},k}^\dagger \mathbf{w}_k \right|^2}{\sum\limits_{i=1, i \neq k}^{K} \left| \mathbf{h}_{\text{eff,B},k}^\dagger \mathbf{w}_i \right|^2 + \left| h_{\text{eff,J},k}(\mathbf{j},\mathbf{v}) \right|^2 P_\text{J} + \sigma_k^2}
\end{equation}
and the sum-rate is given by
\begin{equation}\label{eq:sum_rate}
R_{\textrm{sum}} = \sum_{k = 1}^{K} \log_2(1 + \gamma_k).
\end{equation}

\subsection{Problem Formulation}
The goal is to maximize the sum-rate by jointly optimizing the spatial distribution of the ARIS swarm, the BS beamforming vectors, and the RIS reflection coefficients, under practical constraints. The problem is formulated as\footnote{The interaction between the communication system and the adaptive jammer would ideally be formulated as a max-min optimization problem, i.e., $\max_{\rho, \{ \mathbf{w}_k \}, \boldsymbol{\Theta}} \min_{\mathbf{j} \in \mathcal{J}, \| \mathbf{v} \| = 1} R_{\text{sum}}$. However, in practice, the jammer's strategy adaptation occurs on a significantly longer time scale compared to the nearly instantaneous adjustment of the BS beamforming and RIS phases. Leveraging this temporal disparity, we adopt a pragmatic approach where the worst-case jamming strategy is computed a priori for a given system configuration and is treated as a quasi-static parameter during the optimization of the communication system. The practical operation issues are further detailed in remark \ref{remark:jammer_strategy}.}
\begin{subequations}
	\label{eq:opt_1}
	\begin{align}
    & \max_{\rho, \{ \mathbf{w}_k \}, \boldsymbol{\Theta}} R_{\text{sum}} \\
	& \qquad \text{s.t. } \sum_{k = 1}^{K} \| \mathbf{w}_k \|^2 \leq P_\text{B}  \label{cons:BS_power} \\
	& \qquad \phantom{\text{s.t. }} \left| \theta_n(\mathbf{x}) \right| = 1, n \in \mathcal{I}_N \label{cons:unit_modulus} \\
	& \qquad \phantom{\text{s.t. }} \mathbf{j} \in \mathcal{J}, \quad \| \mathbf{v} \| = 1 \label{cons:jammer_pos} \\
	& \qquad \phantom{\text{s.t. }} \int_{\mathcal{D}} { \rho(\mathbf{x}) } \mathrm{d} \mathbf{x} = Q, 0 \leq \rho(\mathbf{x}) \leq \rho_{\max} \label{cons:density}
	\end{align}
\end{subequations}
where constraint \eqref{cons:BS_power} imposes a total power budget $P_\text{B}$ at the BS; \eqref{cons:unit_modulus} enforces the unit-modulus condition on each RIS reflecting element, which arises from the passive nature of RIS hardware and implies that each element can only adjust the phase of the incident signal without amplifying its amplitude; constraint \eqref{cons:jammer_pos} accounts for the uncertainty in the jammer's position and the jammer's beamforming vector is also constrained to be unit-norm; \eqref{cons:density} ensures that the spatial density function $\rho(\mathbf{x})$ integrates to the total number of UAVs $Q$ over the deployment region $\mathcal{D}$, with an upper bound $\rho_{\max}$ to prevent excessive UAV density and avoid collisions.

The key feature of \eqref{eq:opt_1} is that UAV positions are optimized via a continuous density function instead of discrete coordinates, which avoids the complexity scaling with the number of UAVs $Q$.
The challenges are three-fold. First, the optimization variables include a continuous spatial density function $\rho(\mathbf{x})$ and continuous RIS phase functions $\theta_n(\mathbf{x})$, which are infinite-dimensional in their problem formulation and lead to a functional optimization problem that is computationally demanding. Second, the unit-modulus constraint \eqref{cons:unit_modulus} and the jammer uncertainty in \eqref{cons:jammer_pos} are inherently non-convex. Third, the problem requires robustness against an adaptive jammer that optimally chooses its position and beamforming to minimize the sum-rate, which results in a nested minimax optimization that is challenging to solve.
Overall, problem \eqref{eq:opt_1} is a large-scale, non-convex, functional optimization problem under uncertainty.

\section{Jamming Strategy Analysis}\label{sec:jamming_strategy}
In this section, we analyze the optimal strategy of an adaptive jammer that minimizes the sum-rate by jointly optimizing its position and beamforming vector, under the uncertainty region constraint. The derived strategy characterizes the worst-case jamming attack for a fixed system configuration, forming the basis for our robust transmission design.

\subsection{Jammer's Optimal Beamforming}
Since $R_{\textrm{sum}}$ is monotonically decreasing with the interference term, minimizing $R_{\textrm{sum}}$ is equivalent to maximizing the jamming power. 
The mathematical formulation is 
\begin{equation}\label{jamming_power_max}
\max_{\mathbf{j} \in \mathcal{J}, \| \mathbf{v} \| = 1} \sum_{k = 1}^{K} \left| h_{\text{eff,J},k}(\mathbf{j}, \mathbf{v}) \right|^2.
\end{equation}

Let $\mathbf{b}_k(\mathbf{j}) \triangleq \int_{\mathcal{D}} \mathbf{H}_{\text{J,U}}^\dagger(\mathbf{x}, \mathbf{j}) \boldsymbol{\Theta}^\dagger(\mathbf{x}) \mathbf{h}_{\text{U},k}(\mathbf{x}) \rho(\mathbf{x}) \mathrm{d}\mathbf{x} \in \mathbb{C}^{L \times 1}$.
The objective function of \eqref{jamming_power_max} can be rewritten as 
\begin{align}
\sum_{k = 1}^{K} \left| h_{\text{eff,J},k}(\mathbf{j}, \mathbf{v}) \right|^2 &= \sum_{k = 1}^{K} \left| \mathbf{v}^\dagger \mathbf{b}_k(\mathbf{j}) \right|^2 \\
&= \mathbf{v}^\dagger \underbrace{\left( \sum_{k = 1}^{K} \mathbf{b}_k(\mathbf{j}) \mathbf{b}_k^\dagger(\mathbf{j}) \right)}_{\mathbf{R}(\mathbf{j})} \mathbf{v}
\end{align}
where $\mathbf{R}(\mathbf{j}) \triangleq \sum_{k = 1}^{K} \mathbf{b}_k(\mathbf{j}) \mathbf{b}_k^\dagger(\mathbf{j}) $ represents the jamming covariance matrix that captures spatial correlation of interference across all users.

For a fixed jamming position, the problem \eqref{jamming_power_max} becomes
\begin{equation}\label{jammer_bemforming}
\max_{\|\mathbf{v}\| = 1} \mathbf{v}^\dagger \mathbf{R}(\mathbf{j}) \mathbf{v}
\end{equation}
which is a standard Rayleigh quotient problem and the optimal solution $\mathbf{v}_{\text{opt}} $ is the unit-norm eigenvector corresponding to the largest eigenvalue of $\mathbf{R}(\mathbf{j})$.
The physical insight of $\mathbf{v}_{\text{opt}} $ indicates that the jammer's optimal beamforming transmits along the spatial direction that maximizes coherence with the aggregate jamming channels to the users.

\subsection{Jammer's Optimal Location}
Substituting $\mathbf{v}_{\text{opt}} $ into \eqref{jammer_bemforming}, the jammer's location optimizing problem is reformulated as
\begin{equation}\label{jammer_location_prob}
\max_{\mathbf{j} \in \mathcal{J}} \lambda_{\max}(\mathbf{R}(\mathbf{j}))
\end{equation}
where $\lambda_{\max}( \mathbf{R}(\mathbf{j}) )$ is the maximum eigenvalue of $\mathbf{R}(\mathbf{j})$.
We present the following proposition to address the jammer's optimal deployment strategy.

\begin{proposition}\label{prop_jammer_location}
	The optimal jammer position $\mathbf{j}_\text{opt}$ that maximizes the interference power under position uncertainty constraint satisfies the following two conditions. 
	1) Interior critical point condition: if there exists $\mathbf{j}' \in \mathrm{int}( \mathcal{J} )$ such that the spatial gradient of the maximum eigenvalue vanishes, i.e., $\nabla_{\mathbf{j}} \lambda_{\max}(\mathbf{R}(\mathbf{j}')) = \mathbf{0}$, and $\mathbf{j}'$ is a local maximum, then $\mathbf{j}_\text{opt} = \mathbf{j}'$.
	2) Boundary condition: when no such interior critical point exists, the optimum lies on the boundary $\partial \mathcal{J} = \{ \mathbf{j} | \| \mathbf{j} - \hat{\mathbf{j}} \| = \epsilon \}$ with solution
	\begin{equation}\label{boundary}
	\mathbf{j}_\text{opt} = \hat{\mathbf{j}} + \epsilon \frac{\nabla_{\mathbf{j}} \lambda_{\max}(\mathbf{R}(\mathbf{j})) }{\| \nabla_{\mathbf{j}} \lambda_{\max}(\mathbf{R}(\mathbf{j})) \|}.
	\end{equation}
\end{proposition}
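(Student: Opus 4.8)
The plan is to treat \eqref{jammer_location_prob} as the maximization of a continuous scalar field over the compact convex disk $\mathcal{J}$ and invoke first-order optimality conditions. First I would establish existence: since $\mathcal{J} = \{\mathbf{j} : \|\mathbf{j}-\hat{\mathbf{j}}\| \le \epsilon\}$ is closed and bounded, hence compact, and the map $\mathbf{j} \mapsto \lambda_{\max}(\mathbf{R}(\mathbf{j}))$ is continuous (because $\mathbf{R}(\mathbf{j})$ depends continuously on $\mathbf{j}$ through the channel $\mathbf{H}_{\text{J,U}}(\mathbf{x},\mathbf{j})$, and eigenvalues depend continuously on the matrix entries), the Weierstrass extreme value theorem guarantees a maximizer $\mathbf{j}_\text{opt}$. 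Any such maximizer lies either in the interior $\mathrm{int}(\mathcal{J})$ or on the boundary $\partial\mathcal{J}$, and this dichotomy produces the two cases of the proposition.

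For the interior case, I would note that if $\mathbf{j}_\text{opt} \in \mathrm{int}(\mathcal{J})$ then the position constraint is inactive, so $\mathbf{j}_\text{opt}$ is an unconstrained local maximizer of $\lambda_{\max}(\mathbf{R}(\cdot))$; Fermat's stationarity condition then forces $\nabla_{\mathbf{j}} \lambda_{\max}(\mathbf{R}(\mathbf{j}_\text{opt})) = \mathbf{0}$, which is exactly condition~1). For the boundary case, I would write the constraint as $g(\mathbf{j}) = \|\mathbf{j}-\hat{\mathbf{j}}\|^2 - \epsilon^2 \le 0$ and apply the KKT conditions to $\max \lambda_{\max}$ subject to $g \le 0$. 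Lagrangian stationarity requires $\nabla_{\mathbf{j}}\lambda_{\max} = \mu\,\nabla_{\mathbf{j}} g = 2\mu(\mathbf{j}-\hat{\mathbf{j}})$ with multiplier $\mu \ge 0$; that is, at the optimum the ascent direction of the objective is aligned with the outward radial normal of the disk. Combining this parallelism with the active-constraint equality $\|\mathbf{j}_\text{opt}-\hat{\mathbf{j}}\| = \epsilon$, and assuming $\nabla_{\mathbf{j}}\lambda_{\max} \neq \mathbf{0}$ at $\mathbf{j}_\text{opt}$ (so that $\mu>0$, consistent with the absence of an interior stationary point), I solve $\mathbf{j}_\text{opt}-\hat{\mathbf{j}} = \tfrac{1}{2\mu}\nabla_{\mathbf{j}}\lambda_{\max}$ for its normalized radial form, which is precisely \eqref{boundary} with the gradient evaluated at $\mathbf{j}_\text{opt}$.

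The main obstacle is the differentiability of $\lambda_{\max}(\mathbf{R}(\mathbf{j}))$: the top eigenvalue is a smooth (indeed real-analytic) function of $\mathbf{j}$ only where it is a \emph{simple} eigenvalue, and at points of degeneracy the map is merely directionally differentiable, so the bare gradient appearing in conditions~1)--2) need not exist. I would resolve this by either (i) imposing a standing assumption that $\lambda_{\max}(\mathbf{R}(\mathbf{j}))$ is simple on $\mathcal{J}$---generic for the LoS-dominant channels assumed here---so that Hermitian perturbation theory supplies the gradient via the Hellmann--Feynman formula, with components $\mathbf{u}^\dagger (\partial_{j_i}\mathbf{R})\,\mathbf{u}$ for $\mathbf{u}$ the unit top eigenvector, or (ii) replacing the gradient by the Clarke subdifferential and invoking a Danskin-type theorem for the eigenvalue maximum. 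A secondary subtlety is that \eqref{boundary} is \emph{implicit}: the right-hand side depends on $\mathbf{j}_\text{opt}$ through the gradient, so the expression characterizes a fixed point of the radial-projection map rather than furnishing a closed form. I would therefore present it as a stationarity characterization and defer its numerical resolution to the projected-gradient iteration of Section~\ref{sec:4}.
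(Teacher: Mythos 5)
Your proof follows essentially the same route as the paper's: an interior/boundary dichotomy resolved by Fermat's condition in the interior and KKT stationarity on the boundary, with the multiplier eliminated via the active norm constraint to yield \eqref{boundary}. Your additional observations---Weierstrass existence on the compact disk, the simple-eigenvalue requirement for differentiability of $\lambda_{\max}(\mathbf{R}(\mathbf{j}))$, and the implicit (fixed-point) nature of \eqref{boundary}---are correct refinements that the paper's appendix leaves tacit.
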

\begin{proof}
	See appendix \ref{appendix1}.
\end{proof}

To derive key insights from Proposition \ref{prop_jammer_location}, we present the gradient direction as
\begin{equation}\label{gradient_direction}
\nabla_{\mathbf{j}} \lambda_{\max}(\mathbf{R}(\mathbf{j})) \!\propto\!\! \int_{\mathcal{D}}\! \underbrace{\frac{\alpha}{2} \frac{\mathbf{x} - \mathbf{j}}{\| \mathbf{x} - \mathbf{j} \|^{\alpha + 2}}}_{\text{Distance gradient}} \!\cdot\! \underbrace{\left| \mathbf{v}_{\text{opt}}^\dagger \mathbf{a}_k(\mathbf{x}, \mathbf{j}) \right|^2}_{\text{Beam-aligned gain}} \rho(\mathbf{x}) \mathrm{d}\mathbf{x}
\end{equation}
where $\alpha$ denotes the path-loss exponent; $\mathbf{a}_k(\mathbf{x}, \mathbf{j}) \triangleq \mathbf{H}_{\text{J,U}}^\dagger(\mathbf{x}, \mathbf{j}) \boldsymbol{\Theta}^\dagger(\mathbf{x}) \mathbf{h}_{\text{U},k}(\mathbf{x}) \in \mathbb{C}^{L \times 1}$ represents the position-dependent response vector.
The gradient direction \eqref{gradient_direction} reveals that there exists a proximity-directivity trade-off and the optimal jamming position emerges from this fundamental trade-off.
Specifically, the interior critical point condition represents a balanced spatial configuration where the jamming position achieves optimal trade-off between minimizing path loss to user clusters and maximizing beamforming coherence gain. 
This typically occurs when users are symmetrically distributed around the jammer position.
On the other hand, the boundary condition indicates that when $\nabla_{\mathbf{j}} \lambda_{\max}(\mathbf{R}(\mathbf{j}))$ points toward user clusters, it emphasizes proximity. 
This occurs when users or UAVs are densely clustered near $\hat{\mathbf{j}}$.
The solution prioritizes minimizing average distance to targets, trading off some directivity gain for significant path loss reduction.
When $\nabla_{\mathbf{j}} \lambda_{\max}(\mathbf{R}(\mathbf{j}))$ points away, it reflects that users are spatially dispersed or at large distances.
The solution prioritizes maximizing angular separation for coherent beamforming, trading off path loss for enhanced spatial focusing capability.

In summary, Proposition \ref{prop_jammer_location} provides a rigorous performance lower-bound for robust jamming mitigation in contested environments.
The implementation details for optimal jamming beamforming and positioning are shown in Algorithm \ref{Jammer_opt_alg}.

\begin{algorithm}[!t]
	\caption{Optimal Jammer Beamforming and Positioning}
	\label{Jammer_opt_alg}
	\begin{algorithmic}[1]
		\REQUIRE Channel matrices $\mathbf{H}_{\text{J,U}}(\mathbf{x}, \mathbf{j})$, $\mathbf{h}_{\text{U},k}(\mathbf{x})$, density $\rho(\mathbf{x})$, RIS reflection $\boldsymbol{\Theta}(\mathbf{x})$, estimated jammer location $\hat{\mathbf{j}}$, position uncertainty $\epsilon$
		\ENSURE Optimal $\mathbf{j}_\text{opt}$, $\mathbf{v}_\text{opt}$
		\STATE Compute $\mathbf{b}_k(\mathbf{j})$ and $\mathbf{R}(\mathbf{j})$ at $\mathbf{j} = \hat{\mathbf{j}}$
		\STATE Find critical points: solve $\nabla_{\mathbf{j}} \lambda_{\max}(\mathbf{R}(\mathbf{j})) = \mathbf{0}$
		\IF {$\exists$ critical point $\mathbf{j}' \in \mathcal{J}$ and $\lambda_{\max}(\mathbf{j}') \geq \lambda_{\max}(\hat{\mathbf{j}})$}
		\STATE $\mathbf{j}_{\text{int}} = \mathbf{j}'$
		\ELSE
		\STATE Compute boundary candidate using \eqref{boundary} and obtain $\mathbf{j}_{\text{bd}}$ 
		\ENDIF
		\STATE Compare solutions: $\mathbf{j}_\text{opt} = \underset{\mathbf{j} \in \{ \mathbf{j}_{\text{int}}, \mathbf{j}_{\text{bd}} \}}{\text{arg max}} \lambda_{\max}(\mathbf{R}(\mathbf{j}))$
		\STATE Compute $\mathbf{v}_\text{opt}$ by solving \eqref{jammer_bemforming}.
	\end{algorithmic}
\end{algorithm}

\begin{remark}
	For practical positioning without calculating the gradient, we can use the following heuristic positioning rules.
	For clustered users within the uncertainty region, $\mathbf{j}_\text{opt}$ locates at the centroid of the user cluster.
	For dispersed users, $\mathbf{j}_\text{opt}$ moves to the boundary along the direction of principal component of user positions.
\end{remark}

\section{Joint Communication System Optimization}\label{sec:4}
Following the characterization of the optimal jamming strategy, this section details the optimization of the communication system parameters to maximize the sum-rate. We employ an alternating optimization approach to jointly design the BS beamforming vectors, the RIS reflection coefficients, and the ARIS spatial density distribution.
\subsection{Beamforming Optimization at the BS}
We focus on optimizing the BS beamforming vectors $\{ \mathbf{w}_k \}$ while keeping all other variables fixed.
To alleviate inter-user interference and maximize the sum-rate, we employ a classic two-step approach: zero-forcing (ZF) beamforming to orthogonalize the user channels, followed by optimal power allocation via the water-filling algorithm. This method is well-established in multi-user multi-antenna systems \cite{Zhang_LH2020, Zhang_S2022, Emil_2014} and efficiently handles the BS power constraint \eqref{cons:BS_power}.

First, we define the composite effective channel matrix for the $K$ users as
\begin{equation}
\mathbf{H}_{\text{eff}} = 
\begin{bmatrix}
\mathbf{h}_{\text{eff,B},1}^\dagger \\
\mathbf{h}_{\text{eff,B},2}^\dagger \\
\vdots \\
\mathbf{h}_{\text{eff,B},K}^\dagger
\end{bmatrix} \in \mathbb{C}^{ K \times M }.
\end{equation}

The ZF beamforming matrix $\mathbf{W}_{\text{zf}} \in \mathbb{C}^{M \times K}$ is derived using the pseudoinverse of $\mathbf{H}_{\text{eff}}$, i.e.,
\begin{equation}
\mathbf{W}_{\text{zf}} = \mathbf{H}_{\text{eff}}^\dagger \left( \mathbf{H}_{\text{eff}} \mathbf{H}_{\text{eff}}^\dagger \right)^{-1} \boldsymbol{\Gamma}
\end{equation}
where $\boldsymbol{\Gamma} \in \mathbb{C}^{K \times K}$ is a diagonal matrix of normalization factors to ensure unit gain per user.
Assume that $p_k$ is the power allocated to user $k$. The sum-rate maximization problem now reduces to optimizing $\{ p_k \}$ under the BS power constraint, given by
\begin{subequations}
	\begin{align}
	& \max_{\{p_k\}} \sum_{k = 1}^{K} \log_2 \left( 1 + \frac{p_k}{ \left| h_{\text{eff,J},k}(\mathbf{j},\mathbf{v}) \right|^2 P_\text{J} + \sigma_k^2 } \right) \\
	& \quad \text{s.t. } \sum_{k = 1}^{K} p_k \leq P_\text{B}, \quad p_k \geq 0 
	\end{align}
\end{subequations}
which is a convex optimization problem solved by the water-filling algorithm.
The optimal power allocation is
\begin{equation}
p_k = [ \tilde{\eta} - \left| h_{\text{eff,J},k}(\mathbf{j},\mathbf{v}) \right|^2 P_\text{J} - \sigma_k^2 ]^+
\end{equation}
where $\tilde{\eta}$ is the water level chosen to satisfy the power constraint \eqref{cons:BS_power}.
The final beamforming vector $\mathbf{w}_k$ incorporating both ZF and power allocation is
\begin{equation}\label{eq:ZF+water_filling}
\mathbf{w}_k = \sqrt{p_k} \tilde{\mathbf{w}}_k
\end{equation}
where $\tilde{\mathbf{w}}_k \in \mathbb{C}^{M \times 1}$ represents the $k$th column of $\mathbf{W}_{\text{zf}}$. 

\subsection{Manifold Optimization of RIS Reflection Coefficients}
With fixed ZF beamforming, the inter-user interference is eliminated and the sub-problem w.r.t. RIS reflection is re-formulated as
\begin{subequations}
	\label{opt_RIS}
	\begin{align}
	& \max_{\boldsymbol{\Theta}} \sum_{k = 1}^{K} \log_2 \left(1 + \frac{\left| \mathbf{h}_{\text{eff,B},k}^\dagger \mathbf{w}_k \right|^2}{ \left| h_{\text{eff,J},k}(\mathbf{j},\mathbf{v}) \right|^2 P_\text{J} + \sigma_k^2} \right) \\
	& \quad \text{s.t. } \eqref{cons:unit_modulus}
	\end{align}
\end{subequations}

The RIS reflection optimization is a functional optimization problem over a continuous space.
The key challenges of solving \eqref{opt_RIS} are three-fold. First, both the signal and interference terms depend on $\boldsymbol{\Theta}(\mathbf{x})$ through spatial integrals.
Second, the infinite-dimensional nature of $\theta_n(\mathbf{x})$ prohibits conventional optimization techniques. 
Third, the non-convex unit-modulus constraint induces a complex solution space.
Therefore, we resolve these by formulating the problem on a Riemannian manifold and deriving a closed-form Riemannian gradient for efficient optimization.
Riemannian optimization inherently enforces unit-modulus constraints via projection operators, avoiding feasibility violations common in penalty-based methods \cite{ElMossallamy_2021}.

\subsubsection{Riemannian Manifold Formulation}
The feasible set for $\theta_n(\mathbf{x})$ forms an infinite-dimensional Riemannian manifold
\begin{equation}\label{eq:RIS_manifold}
\mathcal{M} = \left\{ \boldsymbol{\theta} = (\theta_1, \ldots, \theta_N) \mid \theta_n : \mathcal{D} \to \mathbb{C}, \, |\theta_n(\mathbf{x})| = 1 \right\}
\end{equation}
which is an $N$-product of unit circles embedded in a function space.
At $ \boldsymbol{\theta} \in \mathcal{M}$, the tangent space is
\begin{equation}
\mathcal{T}_{\boldsymbol{\theta}} \mathcal{M} = \left\{ \boldsymbol{\xi} = (\xi_1, \ldots, \xi_N) \mid \mathrm{Re} \left( \xi_n(\mathbf{x}) \theta_n^*(\mathbf{x}) \right) = 0 \right\}
\end{equation}
where $\xi_n(\mathbf{x})$ represents admissible perturbations to $\theta_n(\mathbf{x})$.
$\mathrm{Re} \left( \xi_n(\mathbf{x}) \theta_n^*(\mathbf{x}) \right) = 0$ ensures perturbations remain tangent to the unit circle.

\subsubsection{Retraction Map}
The retraction $\mathcal{R}_{\boldsymbol{\theta}} : \mathcal{T}_{\boldsymbol{\theta}} \mathcal{M} \to \mathcal{M}$ projects the updates back to the manifold
\begin{equation}
\mathcal{R}_{\boldsymbol{\theta}}(\boldsymbol{\xi}) = \left( \frac{\theta_n(\mathbf{x}) + \xi_n(\mathbf{x})}{|\theta_n(\mathbf{x}) + \xi_n(\mathbf{x})|} \right)_{n = 1}^N
\end{equation}
which preserves unit modulus.

\subsubsection{Riemannian Gradient Calculation}
The Riemannian gradient $\mathrm{grad}_{\mathcal{M}} R_{\textrm{sum}}$ is derived by projecting the Euclidean gradient $\nabla R_{\textrm{sum}}$ onto $T_{\boldsymbol{\theta}} \mathcal{M}$  
\begin{equation}
\mathrm{grad}_{\mathcal{M}} R_{\textrm{sum}} = \mathrm{Proj}_{\boldsymbol{\theta}}(\nabla R_{\textrm{sum}})
\end{equation}
where the $n$th element at position $\mathbf{x}$ is
\begin{align}
\left[ \mathrm{Proj}_{\boldsymbol{\theta}}(\nabla R_{\textrm{sum}}) \right]_n(\mathbf{x}) &=  [\nabla R_{\textrm{sum}} ]_n(\mathbf{x}) \notag\\
& \hspace{-15mm}- \mathrm{Re} \left( [\nabla R_{\textrm{sum}} ]_n(\mathbf{x}) \theta_n^*(\mathbf{x}) \right) \theta_n(\mathbf{x}).
\end{align}

This projection removes the component of the Euclidean gradient that would violate the unit-modulus constraint.

\begin{proposition}\label{prop_Eu_gradient}
	The Euclidean gradient $\nabla R_{\textnormal{sum}} \in \mathbb{C}^{N \times 1}$ with respect to the RIS reflection coefficients $\boldsymbol{\Theta}(\mathbf{x})$ is given by the functional derivative
	\begin{align}
	\nabla R_{\textnormal{sum}} = \left( \frac{\delta R_{\textnormal{sum}}}{\delta \theta_n(\mathbf{x})} \right)_{n = 1}^N
	\end{align} 
	where the $n$th element at position $\mathbf{x}$ is
	\begin{align}
	\frac{\delta R_{\textnormal{sum}}}{\delta \theta_n(\mathbf{x})} = \rho(\mathbf{x}) \sum_{k = 1}^{K} \frac{1}{(1 + \gamma_k) \ln 2} ( B_k - C_k ) 
	\end{align}
	with the following definitions
	\begin{align}
	B_k &\triangleq \frac{2 \mathrm{Re}\! \left( z_{\text{B},k}^* c_{\text{B},k,n}(\mathbf{x}) \right)}{c_k} \label{eq:B_k} \\
	C_k &\triangleq \frac{2 |z_{\text{B},k}|^2 P_\text{J} \mathrm{Re}\! \left( z_{\text{J},k}^* c_{\text{J},k,n}(\mathbf{x}) \right)}{c_k^2} \label{eq:C_k}\\
	z_{\text{B},k} &\triangleq \mathbf{h}_{\text{eff,B},k}^\dagger \mathbf{w}_k \label{eq:z_Bk}\\
	z_{\text{J},k} &\triangleq h_{\text{eff,J},k}(\mathbf{j}, \mathbf{v}) \label{eq:z_Jk}\\
	c_k &\triangleq |z_{\text{J},k}|^2 P_\text{J} + \sigma_k^2 \label{eq:c_k}\\
	c_{\text{B},k,n}(\mathbf{x}) &\triangleq \left[ \mathbf{h}_{\text{U},k}(\mathbf{x}) \right]_n^* \left[ \mathbf{H}_{\text{B,U}}(\mathbf{x}) \mathbf{w}_k \right]_n \\
	c_{\text{J},k,n}(\mathbf{x}) &\triangleq \left[ \mathbf{h}_{\text{U},k}(\mathbf{x}) \right]_n^* \left[ \mathbf{H}_{\text{J,U}}(\mathbf{x}, \mathbf{j}) \mathbf{v} \right]_n.
	\end{align}
\end{proposition}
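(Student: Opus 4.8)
The plan is to obtain $\delta R_{\textnormal{sum}}/\delta\theta_n(\mathbf{x})$ by applying the chain rule to the composition $R_{\textnormal{sum}} = \sum_k \log_2(1+\gamma_k)$, peeling off the outer logarithm, then the quotient defining $\gamma_k$, and finally resolving the linear dependence of the aggregate channels on each $\theta_n(\mathbf{x})$. First I would invoke the ZF structure of \eqref{eq:ZF+water_filling}: because ZF beamforming nulls inter-user interference, the SINR in \eqref{eq:SINR} collapses to $\gamma_k = |z_{\textnormal{B},k}|^2/c_k$, with $z_{\textnormal{B},k}$, $z_{\textnormal{J},k}$ and $c_k$ as in \eqref{eq:z_Bk}--\eqref{eq:c_k}. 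Differentiating the logarithm then yields $\delta R_{\textnormal{sum}}/\delta\theta_n(\mathbf{x}) = \sum_k [(1+\gamma_k)\ln 2]^{-1}\,\delta\gamma_k/\delta\theta_n(\mathbf{x})$, so the whole task reduces to the single ratio $\gamma_k = |z_{\textnormal{B},k}|^2/c_k$.

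Next I would apply the quotient rule, $\delta\gamma_k/\delta\theta_n = c_k^{-1}\,\delta|z_{\textnormal{B},k}|^2/\delta\theta_n - |z_{\textnormal{B},k}|^2 c_k^{-2}\,\delta c_k/\delta\theta_n$, and note that $c_k$ depends on $\theta_n$ only through $|z_{\textnormal{J},k}|^2$, so $\delta c_k/\delta\theta_n = P_{\textnormal{J}}\,\delta|z_{\textnormal{J},k}|^2/\delta\theta_n$. This isolates the two elementary building blocks $\delta|z_{\textnormal{B},k}|^2/\delta\theta_n$ and $\delta|z_{\textnormal{J},k}|^2/\delta\theta_n$, which share the same form and, once evaluated, will reassemble into precisely $B_k$ and $C_k$ of \eqref{eq:B_k} and \eqref{eq:C_k}.

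The crux is evaluating these modulus-squared derivatives. Expanding the diagonal action $\mathbf{h}_{\textnormal{U},k}^\dagger(\mathbf{x})\boldsymbol{\Theta}(\mathbf{x})\mathbf{H}_{\textnormal{B,U}}(\mathbf{x})\mathbf{w}_k = \sum_n \theta_n(\mathbf{x})\,[\mathbf{h}_{\textnormal{U},k}(\mathbf{x})]_n^*\,[\mathbf{H}_{\textnormal{B,U}}(\mathbf{x})\mathbf{w}_k]_n$, the aggregate terms \eqref{eq:agg_BS_user} and \eqref{eq:agg_jam_user} become linear functionals $z_{\textnormal{B},k} = \int_{\mathcal{D}}\sum_n \theta_n(\mathbf{x})\,c_{\textnormal{B},k,n}(\mathbf{x})\,\rho(\mathbf{x})\,\mathrm{d}\mathbf{x}$, and likewise for $z_{\textnormal{J},k}$. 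The pointwise kernels $c_{\textnormal{B},k,n}$ and $c_{\textnormal{J},k,n}$ named in the statement thus emerge directly as the functional derivatives $\delta z_{\textnormal{B},k}/\delta\theta_n(\mathbf{x}) = \rho(\mathbf{x})c_{\textnormal{B},k,n}(\mathbf{x})$ and $\delta z_{\textnormal{J},k}/\delta\theta_n(\mathbf{x}) = \rho(\mathbf{x})c_{\textnormal{J},k,n}(\mathbf{x})$. Treating $R_{\textnormal{sum}}$ as a real functional of the complex field and taking its first variation via the identity $\delta|z|^2 = 2\,\mathrm{Re}(z^*\,\delta z)$ gives $\delta|z_{\textnormal{B},k}|^2/\delta\theta_n(\mathbf{x}) = 2\rho(\mathbf{x})\,\mathrm{Re}(z_{\textnormal{B},k}^*\,c_{\textnormal{B},k,n}(\mathbf{x}))$ and its jamming counterpart. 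Back-substitution into the quotient rule factors out the common $\rho(\mathbf{x})$ and reproduces the claimed expression exactly.

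The main obstacle I anticipate is the rigorous handling of the complex functional differentiation. One must keep track, via Wirtinger calculus, of the fact that $z_{\textnormal{B},k}$ and $z_{\textnormal{J},k}$ depend on $\theta_n(\mathbf{x})$ but not on its conjugate, so the $2\,\mathrm{Re}(\cdot)$ structure arises only after combining the holomorphic and antiholomorphic contributions mandated by the real-valuedness of $R_{\textnormal{sum}}$; conflating these, or mishandling the interchange of differentiation with the spatial integral over $\mathcal{D}$, would corrupt both the factor of two and the $\rho(\mathbf{x})$ weighting. Once the first variation of $|z|^2$ and the extraction of the diagonal kernels are established, the remaining assembly is routine bookkeeping.
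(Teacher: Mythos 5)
Your proposal is correct and follows essentially the same route as the paper's own proof: exploit the ZF structure to write $\gamma_k = |z_{\text{B},k}|^2/c_k$, apply the chain and quotient rules, extract the kernels $c_{\text{B},k,n}$ and $c_{\text{J},k,n}$ from the linear dependence of the aggregate channels on $\theta_n(\mathbf{x})$ so that $\delta z_{\text{B},k}/\delta\theta_n(\mathbf{x}) = c_{\text{B},k,n}(\mathbf{x})\rho(\mathbf{x})$, and use $\delta|z|^2 = 2\,\mathrm{Re}(z^*\delta z)$ to assemble $B_k$ and $C_k$. No substantive differences from the paper's argument.
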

\begin{proof}
	See appendix \ref{appendix2}.
\end{proof}

From Proposition \ref{prop_Eu_gradient}, we find that the Euclidean gradient contains two critical components, i.e., $B_k$ and $C_k$.
$B_k$ denotes signal enhancement term from coherent superposition at user $k$, while $C_k$ represents interference suppression term mitigating jammer-induced degradation.
In addition, the spatial density $\rho(\mathbf{x})$ scales gradient magnitude by drone density, emphasizing high-density regions.

The proposed Riemannian manifold optimization framework provides a rigorous mathematical foundation for optimizing RIS reflection coefficients in continuous space, which explicitly handles the non-convex unit-modulus constraints and couples signal and interference physics through closed-form gradients.
The implementation details for optimal RIS phase configuration are shown in Algorithm \ref{alg:ris-manifold-opt}.

\begin{algorithm}[!t]
	\caption{Riemannian Manifold Optimization for Continuous RIS Phase Control}
	\label{alg:ris-manifold-opt}
	\begin{algorithmic}[1]
		\REQUIRE 
		Spatial density $\rho(\mathbf{x})$, channel coefficients for all users and RIS elements, step size, backtracking factor, convergence tolerance, maximum number of iterations.
		
		\ENSURE Optimized phase functions $\theta_n(\mathbf{x})$.
		
		\STATE \textbf{Initialize:} Generate random unit-modulus phase functions. Set iteration counter to zero.
		
		\WHILE{stopping criterion not met \AND within maximum iterations}
		
		\STATE Compute aggregate effective channels for all users via spatial integration.
		
		\STATE Calculate the Euclidean gradient of the sum-rate $\nabla R_{\textrm{sum}}$ using the result of Proposition \ref{prop_Eu_gradient}.
		
		\STATE Project the Euclidean gradient onto the tangent space of the complex circle manifold.
		
		\STATE Update the phase functions by moving along the tangent direction and retracting onto the manifold.
		
		\STATE Adjust the step size via backtracking to ensure objective function improvement.
		
		\STATE Check convergence based on the norm of the Riemannian gradient.
		\STATE Increment the iteration counter.
		\ENDWHILE
		
		\STATE \textbf{Return} The optimized phase functions $\theta_n(\mathbf{x})$.
	\end{algorithmic}
\end{algorithm}

\subsection{Variational Optimization of UAV Spatial Density}
Recall the definitions of the effective signal gain $z_{\text{B},k}$ and the effective jamming interference $z_{\text{J},k}$ from \eqref{eq:z_Bk} and \eqref{eq:z_Jk}, which are functionals of the spatial density $\rho(\mathbf{x})$.
The UAV spatial density optimization problem is thus written as
\begin{subequations}
	\label{opt_density}
	\begin{align}
	\max_{\rho} &\sum_{k = 1}^{K} \log_2 \left(1 + \frac{| z_{\text{B},k} |^2}{| z_{\text{J},k} |^2 P_\text{J} + \sigma_k^2} \right) \\
	\qquad &\text{s.t. } \eqref{cons:density}
	\end{align}
\end{subequations}
which is a constrained functional optimization problem.

We propose a variational framework using calculus of variations \cite{Brunt_2004} to solve this problem.
The Lagrangian functional is
\begin{equation}\label{eq:lagrangian}
\mathcal{L}(\rho) = R_{\text{sum}} + \tau \left( Q - \int_{\mathcal{D}} \rho(\mathbf{x}) d\mathbf{x} \right)
\end{equation}
where $\tau$ is a Lagrange multiplier for the UAV quantity constraint. 
The functional derivative of $\mathcal{L}(\rho)$ with respect to $\rho(\mathbf{x})$ must vanish at optimality, subject to the box constraint \eqref{cons:density}.
In the following, we provide the derivation of functional derivative.

\subsubsection{Variation of Effective Channels}
We introduce a perturbation $\rho(\mathbf{x}) \to \rho(\mathbf{x}) + \zeta \eta(\mathbf{x})$, where $\zeta$ is infinitesimal and $ \eta(\mathbf{x}) $ is an arbitrary test function. The first-order variations in $z_{\text{B},k}$ and $z_{\text{J},k}$ is
\begin{align}
\delta z_{\text{B},k} &= \zeta \int_{\mathcal{D}} f_k(\mathbf{x}) \eta(\mathbf{x}) \mathrm{d}\mathbf{x} \label{eq:delta_s_k}\\
\delta z_{\text{J},k} &= \zeta \int_{\mathcal{D}} g_k(\mathbf{x}) \eta(\mathbf{x}) \mathrm{d}\mathbf{x}
\end{align}
where $f_k(\mathbf{x}) \triangleq \mathbf{h}_{\text{U},k}^\dagger(\mathbf{x}) \boldsymbol{\Theta}(\mathbf{x}) \mathbf{H}_{\text{B,U}}(\mathbf{x}) \mathbf{w}_k$ and $g_k(\mathbf{x}) \triangleq \mathbf{h}_{\text{U},k}^\dagger(\mathbf{x}) \boldsymbol{\Theta}(\mathbf{x}) \mathbf{H}_{\text{J,U}}(\mathbf{x}, \mathbf{j}) \mathbf{v}$ are the position-dependent signal and jamming response functions. 

\subsubsection{Variation of SINR}
The variation in $\gamma_k$ is derived via chain rule
\begin{equation}
\delta \gamma_k = \frac{\partial \gamma_k}{\partial z_{\text{B},k}} \delta z_{\text{B},k} + \frac{\partial \gamma_k}{\partial z_{\text{J},k}} \delta z_{\text{J},k}.
\end{equation}

Recalling the definition of $\gamma_k$ in \eqref{opt_density}, we have
\begin{align}
\frac{\partial \gamma_k}{\partial z_{\text{B},k}} &= \frac{2 \mathrm{Re}(z_{\text{B},k}^* \delta z_{\text{B},k})}{c_k} \\
\frac{\partial \gamma_k}{\partial z_{\text{J},k}} &= - \frac{2 |z_{\text{B},k}|^2 P_\text{J} \mathrm{Re}(z_{\text{J},k}^* \delta z_{\text{J},k})}{c_k^2}. \label{eq:partial_gamma_k}
\end{align}

\subsubsection{Variation of Sum-rate}
The variation in $R_{\text{sum}}$ is 
\begin{equation}
\delta R_{\text{sum}} = \sum_{k = 1}^{K} \frac{1}{\ln 2} \frac{1}{1 + \gamma_k} \delta \gamma_k. \label{eq:delta_R_sum}
\end{equation}

Substituting \eqref{eq:delta_s_k}-\eqref{eq:partial_gamma_k} into \eqref{eq:delta_R_sum} yields
\begin{equation}
\delta R_{\textrm{sum}} =  \zeta \int_{\mathcal{D}} G(\mathbf{x})  \eta(\mathbf{x}) d\mathbf{x}
\end{equation}
where $G(\mathbf{x})$ denotes the net marginal gain function with expression
\begin{equation}\label{eq:net_gain_funtion}
G(\mathbf{x}) \triangleq \sum_{k = 1}^{K} \frac{2}{\ln 2} \frac{ \mathrm{Re}(z_{\text{B},k}^* f_k(\mathbf{x})) - P_\text{J} \gamma_k \mathrm{Re}(z_{\text{J},k}^* g_k(\mathbf{x})) }{| z_{\text{B},k} |^2 + c_k}.
\end{equation}

\subsubsection{The Functional Derivative}
The variation in the constraint term of \eqref{eq:lagrangian} is
\begin{equation}
\delta \left[ \tau \left( Q - \int_{\mathcal{D}} \rho(\mathbf{x}) \mathrm{d}\mathbf{x} \right) \right] = - \tau \int_{\mathcal{D}} \zeta \eta(\mathbf{x}) \mathrm{d}\mathbf{x}.
\end{equation}

Thus, the total variation in $\mathcal{L}( \rho )$ is
\begin{equation}
\delta \mathcal{L} = \int_{\mathcal{D}} \left( G(\mathbf{x}) - \tau \right) \zeta \eta(\mathbf{x}) d\mathbf{x}.
\end{equation}

For $\delta \mathcal{L} = 0$ to hold for all admissible $\eta(\mathbf{x})$ and considering the box constraint \eqref{cons:density}, the optimal density function is
\begin{equation}\label{eq:opt_rho}
\rho_{\textrm{opt}}(\mathbf{x}) = 
\begin{cases}
\rho_{\max} & \text{if } G(\mathbf{x}) > \tau \\
\chi \in (0, \rho_{\max}) & \text{if } G(\mathbf{x}) = \tau \\
0 & \text{if } G(\mathbf{x}) < \tau
\end{cases}
\end{equation}
which reveals the density function is either maximal or zero, except for the case $G(\mathbf{x}) = \tau$.
The Lagrangian multiplier $\tau$ is updated via bisection to satisfy $\int_{\mathcal{D}} { \rho(\mathbf{x}) } d\mathbf{x} = Q$.
The solution \eqref{eq:opt_rho} resembles a spatial version of waterfilling, where $G(\mathbf{x})$ acts as the equivalent of channel gain in spatial domain. 
$\tau$ is the water level adjusted to consume total UAV numbers $Q$.
The implementation details are shown in Algorithm \ref{alg:dt-ara}, which is named as DT-ARA to emphasize density tuning in aerial networks.
The properties and advantages of the proposed DT-ARA algorithm are summarized in the following remarks.

\begin{algorithm}[!t]
	\caption{DT-ARA for ARIS Spatial Distribution Optimization}
	\label{alg:dt-ara}
	\begin{algorithmic}[1]
		\REQUIRE Net marginal gain function $G(\mathbf{x})$, total UAV quantity $Q$, maximum density $\rho_{\max}$, convergence tolerance for bisection.
		\ENSURE Optimized spatial density function $\rho_\text{opt}(\mathbf{x})$
		\STATE Set lower and upper bounds for the water level, $\tau_{\text{low}}$ and $\tau_{\text{high}}$.
		\WHILE{$(\tau_{\text{high}} - \tau_{\text{low}}) > \text{tolerance}$}
		\STATE Set $\tau_{\text{mid}} = (\tau_{\text{low}} + \tau_{\text{high}}) / 2$.
		\STATE For all $\mathbf{x} \in \mathcal{D}$, allocate density: 
		
		$\rho_{\text{temp}}(\mathbf{x}) = \begin{cases}
		\rho_{\max} & \text{if } G(\mathbf{x}) > \tau_{\text{mid}} \\
		0 & \text{otherwise}
		\end{cases}$
		\STATE Compute total allocated UAVs: $Q_{\text{temp}} = \int_{\mathcal{D}} \rho_{\text{temp}}(\mathbf{x}) d\mathbf{x}$
		\IF{ $Q_{\text{temp}} \leq Q$ }
		\STATE Set $\tau_{\text{high}} = \tau_{\text{mid}}$
		\ELSE
		\STATE Set $\tau_{\text{low}} = \tau_{\text{mid}}$
		\ENDIF
		\ENDWHILE
		\STATE \textbf{return} The optimized spatial density function $\rho_\text{opt}(\mathbf{x})$
	\end{algorithmic}
\end{algorithm}

\begin{remark}
The DT-ARA algorithm concentrates UAVs in regions where the net marginal gain function $G(\mathbf{x})$ is high. This aligns with the physical intuition that high $G(\mathbf{x})$ occurs where signal enhancement, i.e., $\mathrm{Re}(z_{\text{B},k}^* f_k(\mathbf{x}))$, dominates over potential interference amplification $P_\text{J} \gamma_k \mathrm{Re}(z_{\text{J},k}^* g_k(\mathbf{x}))$. 
On the other hand, low $G(\mathbf{x})$ regions either amplify interference or contribute weakly to signal strength.
\end{remark}

\begin{remark}
As derived in the Proposition \ref{prop_jammer_location}, optimal positioning balances proximity to users against directivity for interference mitigation. DT-ARA extends this to UAV deployment. The density is $\rho_{\max}$ in user-proximal zones with strong paths to BS/users and weak coupling to jammer, while the density becomes zero in interference-aligned zones facing the jammer or with poor signal paths.
\end{remark}

\begin{remark}
Consequently, the optimal deployment inherently avoids positioning ARIS nodes where they would amplify jamming signals. This spatial avoidance is a critical resilience feature, as it prevents the system from inadvertently strengthening interference in adversarial directions.
\end{remark}

The preceding variational derivation confirms that the spatial water-filling solution \eqref{eq:opt_rho} is mathematically optimal for UAV density optimization under the given constraints. 
Thus, the proposed DT-ARA algorithm provides an efficient means to this optimum by exploiting spatial heterogeneity in signal and interference channels, balancing the trade-off between signal enhancement and jamming risks, and converging to a resource allocation where UAVs are either maximally dense or absent.
This optimal strategy aligns with the observation that RIS-assisted anti-jamming requires strategic spatial avoidance of interference pathways while concentrating resources on high-gain links. 
More broadly, the framework extends the mean-field techniques to aerial networks, offering a foundational approach for robust deployment in contested environments.

\begin{algorithm}[t]
	\caption{Joint Anti-Jamming Optimization Framework}
	\label{alg:overall}
	\begin{algorithmic}[1]
		\REQUIRE Channel state information, estimated jammer position $\hat{\mathbf{j}}$, uncertainty radius $\epsilon$, BS power $P_\text{B}$, jamming power $P_\text{J}$, convergence threshold $\varepsilon$, max iterations $T_{\max}$.
		\ENSURE Optimized BS beamforming $\{\mathbf{w}_k\}$, RIS phases $\boldsymbol{\Theta}(\mathbf{x})$, UAV density $\rho(\mathbf{x})$.
		
		\STATE \textbf{Initialize:} $\rho^{(0)}(\mathbf{x})$, $\boldsymbol{\Theta}^{(0)}(\mathbf{x})$, set $t = 0$ 
		\STATE Compute worst-case jamming strategy $(\mathbf{j}, \mathbf{v})$ using Algorithm \ref{Jammer_opt_alg}.
		\REPEAT
		\STATE Update BS beamforming $\{\mathbf{w}_k\}^{(t)}$ via \eqref{eq:ZF+water_filling}.
		\STATE Optimize RIS phases $\boldsymbol{\Theta}^{(t)}(\mathbf{x})$ using Algorithm \ref{alg:ris-manifold-opt}.
		\STATE Optimize UAV density $\rho^{(t)}(\mathbf{x})$ using Algorithm \ref{alg:dt-ara}.
		\STATE Update $t$ with $t + 1$.
		\UNTIL $\left|R_{\text{sum}}^{(t)} - R_{\text{sum}}^{(t-1)}\right| < \varepsilon$ or $t > T_{\max}$
		\STATE \textbf{return} $\{\mathbf{w}_k\} = \{\mathbf{w}_k\}^{(t)}$, $\boldsymbol{\Theta}(\mathbf{x}) = \boldsymbol{\Theta}^{(t)}(\mathbf{x})$, $\rho(\mathbf{x}) = \rho^{(t)}(\mathbf{x})$.
	\end{algorithmic}
\end{algorithm}

\subsection{Overall Algorithm}
The proposed algorithm employs an alternating optimization framework to solve the joint design problem for anti-jamming communication systems. The procedure begins by determining the worst-case jamming strategy using Algorithm \ref{Jammer_opt_alg}, which computes the optimal jammer position and beamforming vector under uncertainty constraints. Subsequently, an iterative alternating optimization loop is executed, where the BS beamforming vectors, RIS phase configurations, and UAV spatial density distribution are sequentially updated while fixing other variables. This process iterates until the sum-rate improvement falls below a predefined convergence threshold. The implementation details are shown in Algorithm \ref{alg:overall}.


\begin{remark}\label{remark:jammer_strategy}
While Algorithm \ref{alg:overall} presents a joint optimization framework, its practical execution could leverage the different adaptation time scales of the system components.
Specifically, intelligent jammers operate on different timescales. For instance, their beamforming vector can be adjusted electronically with negligible latency, whereas repositioning suffers from significant delay due to mechanical constraints. 
This inherent hysteresis allows our system to operate a hierarchical defense. 
In particular, we can rapidly update the BS/RIS parameters to counteract jamming beam variations, while performing occasional recomputation of the ARIS density distribution, e.g., upon persistent performance degradation, to counter large-scale jammer repositioning. 
This approach efficiently balances computational tractability with robustness in dynamically contested environments.
\end{remark}

\section{Performance Analysis}\label{sec:performance}
This section analyzes the computational complexity and convergence properties of the proposed framework. We first break down the algorithmic complexity, emphasizing the scalability derived from the mean-field approach. Then, we establish the convergence guarantee of the alternating optimization procedure.
\subsection{Computational Complexity}
The overall computational complexity consists of four main components, with the total complexity given by
\begin{equation}
\mathcal{C}_{\textrm{total }}=\mathcal{C}_{\textrm{Jam }}+ T_{\textrm{alt}} (\mathcal{C}_{\textrm{BS}}+\mathcal{C}_{\textrm{RIS}}+\mathcal{C}_{\textrm{UAV}})
\end{equation}
where $\mathcal{C}_{\textrm{Jam }}$ denotes the computational complexity of Algorithm \ref{Jammer_opt_alg}; $\mathcal{C}_{\textrm{BS}}$ denotes the computational complexity of the BS beamforming; $\mathcal{C}_{\textrm{RIS}}$ denotes the complexity of Algorithm \ref{alg:ris-manifold-opt}; $\mathcal{C}_{\textrm{UAV}}$ denotes the complexity of Algorithm \ref{alg:dt-ara}; $T_{\textrm{alt}}$ denotes the number of alternating optimization iterations required for convergence.

Regarding $\mathcal{C}_{\textrm{Jam }}$, the dominant operations include eigenvalue decomposition of $\mathbf{R}(\mathbf{j})$, which costs $\mathcal{O}(L^3)$, and spatial gradient computation of $\nabla_{\mathbf{j}} \lambda_{\max}(\mathbf{R}(\mathbf{j}))$, which costs $\mathcal{O}(KL | \mathcal{G} |)$ and $| \mathcal{G} |$ is the number of grid points for feasible region $\mathcal{D}$ discretization.
\begin{equation}
\mathcal{C}_{\textrm{Jam }} = \mathcal{O}(L^3 + KL | \mathcal{G} |).
\end{equation}

Regarding $\mathcal{C}_{\textrm{BS}}$, the zero-forcing matrix inversion costs $\mathcal{O}(K^3)$ and water-filling power allocation costs $\mathcal{O}(K \log K)$, which yields
\begin{equation}
\mathcal{C}_{\textrm{BS}} = \mathcal{O}(K^3).
\end{equation}

Regarding $\mathcal{C}_{\textrm{RIS}}$, the functional gradient computation costs $\mathcal{O}(N^2 K)$ and the spatial integral evaluation costs $\mathcal{O}( | \mathcal{G} | N K)$. Thus we have 
\begin{equation}
\mathcal{C}_{\textrm{RIS}} = \mathcal{O}( T_{\textrm{RIS}} ( N^2 K + | \mathcal{G} | N K ) )
\end{equation}
where $T_{\textrm{RIS}}$ is the number of iterations of Algorithm \ref{alg:ris-manifold-opt}.

Regarding $\mathcal{C}_{\textrm{UAV}}$, evaluation gain function $G(\mathbf{x})$ costs $\mathcal{O}( | \mathcal{G} | K )$ and bisection search for $\lambda$ costs $\mathcal{O}(|\mathcal{G}| K \log(1 / \epsilon_{\lambda}))$. Thus we have
\begin{equation}
\mathcal{C}_{\textrm{UAV}} = \mathcal{O}(|\mathcal{G}| K \log(1 / \epsilon_{\lambda}))
\end{equation}
which scales linearly with spatial resolution $|\mathcal{G}|$.

\begin{remark}
A key advantage of the mean-field approach is that the computational complexity of the DT-ARA algorithm is independent of the number of UAVs $Q$. Instead, it scales linearly with the spatial discretization resolution $|\mathcal{G}|$. This makes the framework particularly efficient for managing large-scale ARIS swarms, as adding more drones does not increase the algorithmic computation cost.
\end{remark}

\subsection{Convergence Analysis}
Note that each iteration of the alternating optimization loop generates a non-decreasing sequence of sum-rate values. 
For example, considering the three sequential updates within iteration $t$, we have
\begin{align}
R_{\text{sum}}(\mathbf{w}^{(t)}, \boldsymbol{\Theta}^{(t)}, \rho^{(t)}) &\leq R_{\text{sum}}(\mathbf{w}^{(t + 1)}, \boldsymbol{\Theta}^{(t)}, \rho^{(t)}) \\
& \leq R_{\text{sum}}(\mathbf{w}^{(t+1)}, \boldsymbol{\Theta}^{(t + 1)}, \rho^{(t)}) \\
& \leq R_{\text{sum}}(\mathbf{w}^{(t + 1)}, \boldsymbol{\Theta}^{(t + 1)}, \rho^{(t + 1)}).
\end{align}

This monotonic improvement arises from the fact that each subproblem, e.g., optimizing the BS beamforming, the RIS phase shifts, or the UAV spatial density, is solved to either global optimality (as in the case of ZF beamforming with water-filling) or to a stationary point (as in the Riemannian manifold and variational optimization steps). Since each update either increases or maintains the sum-rate, the overall alternating optimization procedure is guaranteed to converge to a locally optimal solution. The boundedness of the objective function, due to the finite transmit power and interference constraints, ensures that the sequence of sum-rate converges to a finite value. Although the problem \eqref{eq:opt_1} is non-convex, the alternating approach efficiently decomposes the joint design into tractable subproblems, each of which contributes to the overall convergence behavior.

\section{Numerical Results}\label{sec:numerical_results}
In this section, we provide numerical results to verify our analysis and evaluate the performance of the proposed algorithms. 
The performance is evaluated under various system parameters and compared against multiple benchmark schemes.
\subsection{Simulation Setup}
\begin{table}[!t]
	\centering
	\caption{System Parameters for Numerical Simulation}
	\label{tab:system_params}
	\begin{tabular}{ll}
		\toprule
		\textbf{Parameter} & \textbf{Value} \\ 
		\midrule
		BS antennas, $M$ & $16$  \\
		RIS elements per UAV, $N$ & $20$-$100$  \\
		UAV count, $Q$ & $100$-$1000$ \\
		Users count, $K$ & $4, 8$ \\
		Jammer antennas, $L$ & $16$ \\
		UAV deployment region, $\mathcal{D}$ & $200 \times 200 \text{ m}^2$ \\
		Number of grid points for $\mathcal{D}$, $|\mathcal{G}|$ & $20 \times 20$ \\
		UAV altitude & $100$ m \cite{Wang_D2025}\\
		Path-loss exponent, $\alpha$ & $2.2$ \cite{Hou_Z2023}  \\
		Reference channel gain, $\beta$ & $-30$ dB \\
		Rician K-factor, $\kappa$ & $10$ dB \cite{Wang_D2025} \\
		BS maximum transmit power, $P_\text{B}$ & $30$-$70$ dBm \\
		Maximum jamming power, $P_\text{J}$ & $40$-$80$ dBm \\
		Jamming position uncertainty, $\epsilon$ & $0$-$30$ m \\
		Maximum density, $\rho_{\max}$ & $0.02$-$0.2$ UAVs/m$^2$ \\
		Noise power, $\sigma_k^2$ & $-102$ dBm \cite{Peng_H2023}\\
		\bottomrule
	\end{tabular}
\end{table}

\begin{figure*}[t] 
	\centering
	\includegraphics[width=0.65\textwidth]{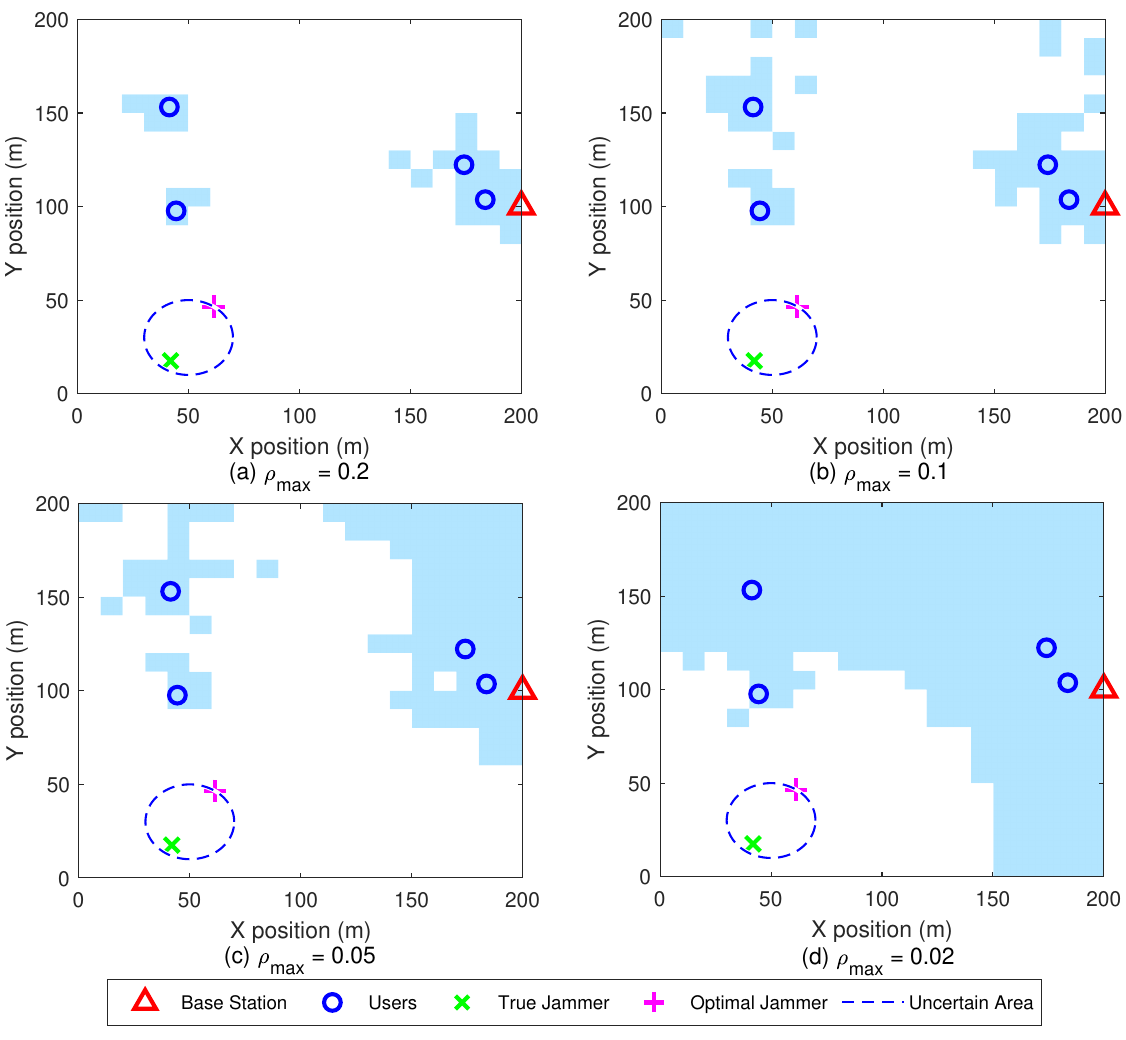} 
	\caption{\raggedright ARIS density distribution under different $\rho_{\max}$, where each grid denotes a $10$ m $\times$ $10$ m area. The blue grids indicate maximum ARIS density and white grids denote no ARIS deployment.} 
	\label{fig:UAV_distribution}
\end{figure*}

The main parameters of the simulation are list in Table \ref{tab:system_params}.
The BS and users are fixed at ground level, and the jammer operates within uncertainty region $\mathcal{J}$.
Given the characteristics of air-ground propagation environment, the channel model contains LoS paths due to elevated drone deployments, and non-line-of-sight (NLoS) components arising from terrestrial blockages and scattering.
Take the BS-ARIS channel $\mathbf{H}_{\text{B,U}}$ as an example, which combines LoS and NLoS contributions as \cite{Wang_D2025}
\begin{equation}
\mathbf{H}_{\text{B,U}} = \sqrt{\beta d_{\text{B,U}}^{-\alpha}} \bigg( \sqrt{\frac{\kappa}{\kappa + 1}} \mathbf{H}_{\text{LoS}} + \sqrt{\frac{1}{\kappa + 1}}  \mathbf{H}_{\text{NLoS}} \bigg)
\end{equation}
where $\beta$ is the channel power gain measured at 1 m; $\alpha$ is the path loss exponent; $d_{\text{B,U}}$ denotes the Euclidean distance between BS and ARIS; $\kappa$ is the Rician K-factor. Assuming the antenna elements are arranged in uniform linear array, the LoS component is
\begin{equation}
\mathbf{H}_{\text{LoS}}(\mathbf{x}) = \tilde{\mathbf{a}}_\text{r}(\theta) \tilde{\mathbf{a}}_\text{t}^\dagger(\phi) 
\end{equation}
where $\tilde{\mathbf{a}}_\text{r}(\theta) \in \mathbb{C}^{N}$ and $\tilde{\mathbf{a}}_\text{t}(\phi) \in \mathbb{C}^{M}$ are array response vectors with angle of departure $\theta$ and 
angle of arrival $\phi$. The expressions are omitted for brevity.
$\mathbf{H}_{\text{NLoS}} \in \mathbb{C}^{N \times M}$ models diffuse multipath via independent and identically distributed entries following $\mathcal{CN}(0 , 1)$.
Similarly, the ARIS-to-user channel and jammer-to-ARIS channel follow analogous formulations.

We compare the performances of the following $7$ schemes.
\begin{enumerate}[topsep=0pt]
\item \textit{Proposed Robust Joint Optimization:} This is our proposed scheme based on Algorithm \ref{alg:overall}.
\item \textit{Scheme 1:} This is our proposed scheme without considering jammer's location uncertainty. The estimated jammer location is treated as the true site, and thus Scheme 1 is a non-robust version of the proposed scheme.
\item \textit{Scheme 2:}  This is our proposed scheme with ARIS uniformly distributed in the feasible region and ARIS phase shifts optimized using Algorithm \ref{alg:ris-manifold-opt}.
\item \textit{Scheme 3:}  This is our proposed scheme with ARIS uniformly distributed in the feasible region and ARIS phase shifts randomly generated.
\item \textit{Scheme 4 \cite{Liu_J2024}:} Deploying a single dynamic ARIS to serve multiple users under the interference of an adjacent cell, with joint optimization of BS beamforming, ARIS phase shift, and ARIS location.
\item \textit{Scheme 5 \cite{Shang_B2023}:}  Deploying multiple ARIS to serve multiple users with no countermeasure to the jammer. The sum-rate is maximized by joint optimization of BS transmit beamforming, ARIS phase shift, and ARIS placement. For a fair comparison, the number of ARIS is identical to that of the proposed scheme.
\item \textit{Scheme 6 \cite{Tang_X2021}:} Deploying a single static ARIS to serve a single user in presence of an intentional jammer, with joint optimization of the ARIS position and phase shift.	
\end{enumerate}

\subsection{Results Analysis}

Fig. \ref{fig:UAV_distribution} presents the ARIS density distribution under varying maximum density constraints $\rho_{\max} \in \{ 0.2, 0.1, 0.05, 0.02 \}$ in the operational area with $10$ m grid resolution, featuring a BS, $4$ randomly distributed users, and a jammer with true position and derived optimal jamming position with uncertainty radius $\epsilon = 30$ m. The BS transmit power is $40$ dBm and the jamming power is $50$ dBm. 
Three observations can be drawn from the results. 
First, the density distribution reveals that the ARISs concentrate near BS and users while avoiding jammer-proximal regions, aligning with the proposed DT-ARA algorithm's principle where the net marginal gain function $G(\mathbf{x})$ prioritizes positions with high SINR. 
Through binary thresholding, it activates $\rho(\mathbf{x}) = \rho_{\max}$ only where $G(\mathbf{x}) > \tau$ and suppresses deployment near interference sources since jamming power dominates the denominator of $G(\mathbf{x})$ in jammer-affected zones.
Second, an inverse relationship emerges between maximum density $\rho_{\max}$ and spatial coverage.
Higher $\rho_{\max}$, e.g., $\rho_{\max} = 0.2$, forces the ARISs into sparse high-density clusters at critical signal pathways near BS/users to satisfy the constraint $\rho(\mathbf{x}) \leq \rho_{\max}$ while maintaining total UAV count $Q$.
Lower $\rho_{\max}$, e.g., $\rho_{\max} = 0.02$, distributes ARISs across broader areas as the proposed algorithm must activate more locations to accommodate $Q$, sacrificing beamforming intensity for spatial diversity.
Third, the optimal jammer position locates at the uncertainty region boundary towards the user cluster centroid, as predicted by Proposition \ref{prop_jammer_location}, which prioritizes distance minimization to dominant user groups when they are clustered.

\begin{figure}[t] 
	\centering
	\includegraphics[width=0.4\textwidth]{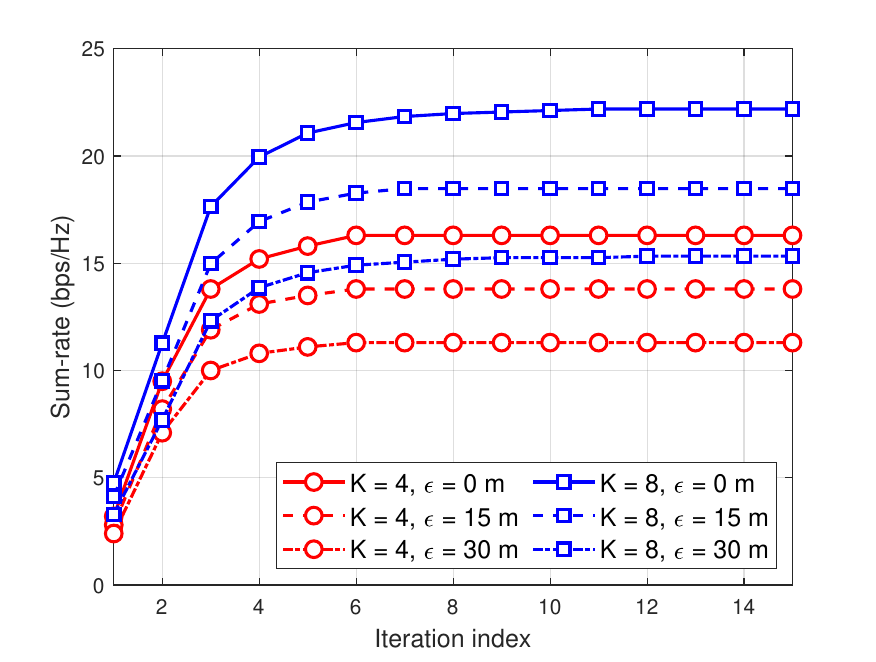} 
	\caption{\raggedright Convergence of the proposed algorithm under different users count $K$ and jammer position uncertainties.} 
	\label{fig:convergence}
\end{figure}

Fig. \ref{fig:convergence} illustrates the convergence behavior of the proposed robust joint optimization algorithm under different system configurations.
The simulation considers two user scenarios, i.e., $K = 4$ and $K = 8$, and three jammer position uncertainty radii.
The jamming power is $P_{\textrm{J}} = 50$ dBm. The BS transmit power for $K = 4$ is $P_{\textrm{B}} = 40$ dBm and for $K = 8$ is $P_{\textrm{B}} = 50$ dBm.
The RIS elements per UAV is $N = 50$ and total UAV count is $Q = 100$ with $\rho_{\max} = 0.05$.
Key observations reveal that all configurations achieve convergence within $8$ iterations, confirming the proposed algorithm's computational efficiency.
The $K = 4$ configuration stabilizes at approximately $6$ iterations, while $K = 8$ requires approximately $8$ iterations due to increased coordination complexity. 
Jamming position uncertainty degrades performance across both user configurations, where larger error radii results in lower sum-rates compared to perfect jammer location knowledge.
Moreover, the $8$-user configuration exhibits greater sensitivity to jammer position uncertainty compared to the $4$-user case, as the increased number of users disperses the beamforming gain and complicates interference alignment under larger $\epsilon$.
Overall, the algorithm demonstrates stable convergence across all tested scenarios, validating its robustness for practical implementation in contested environments with uncertain jammer positions.

Fig. \ref{fig:sumrate_vs_pj} shows the sum-rate performances of various ARIS-assisted anti-jamming schemes against increasing jamming power from $40$ dBm to $80$ dBm. 
The BS serves $4$ users and transmit power is $45$ dBm.
The RIS elements per UAV is $N = 100$ with total UAV count $Q = 100$ with $\rho_{\max} = 0.05$.
The proposed robust joint optimization scheme demonstrates superior anti-jamming performance, maintaining the highest sum-rates across all jamming power levels due to its integrated optimization of UAV spatial distribution, RIS phase configurations, and robust beamforming that explicitly accounts for jammer position uncertainty.
Specifically, at $40$ dBm jamming power, the proposed scheme achieves $26$ bps/Hz, outperforming Scheme 2 (uniform UAV distribution with optimized phases) by $44$\%, Scheme 3 (uniform UAV distribution with random phases) by $150$\%, and state-of-the-art reference schemes by $108$-$300$\%.
As jamming power increases to $80$ dBm, the proposed scheme's sum-rate declines to $10$ bps/Hz, which is still acceptable in such highly unfavorable scenario.

Comparative analysis reveals that Scheme 2 maintains reasonable performance at moderate jamming levels but suffers performance loss due to unoptimized spatial resource allocation, while Scheme 3 shows higher vulnerability.
The reference schemes demonstrate fundamental limitations.
Specifically, the single dynamic ARIS approach in Scheme 4 lacks spatial diversity.
The multi-ARIS solution without jamming consideration in Scheme 5 shows inadequate interference suppression, and the static ARIS deployment in Scheme 6 exhibits the worst performance due to the weakness of a single fixed UAV. 
These results collectively validate that joint optimization of spatial distribution and RIS reflecting properties through the proposed framework is essential for maintaining reliable communications in contested environments with adaptive jammers.

\begin{figure}[t] 
	\centering
	\includegraphics[width=0.4\textwidth]{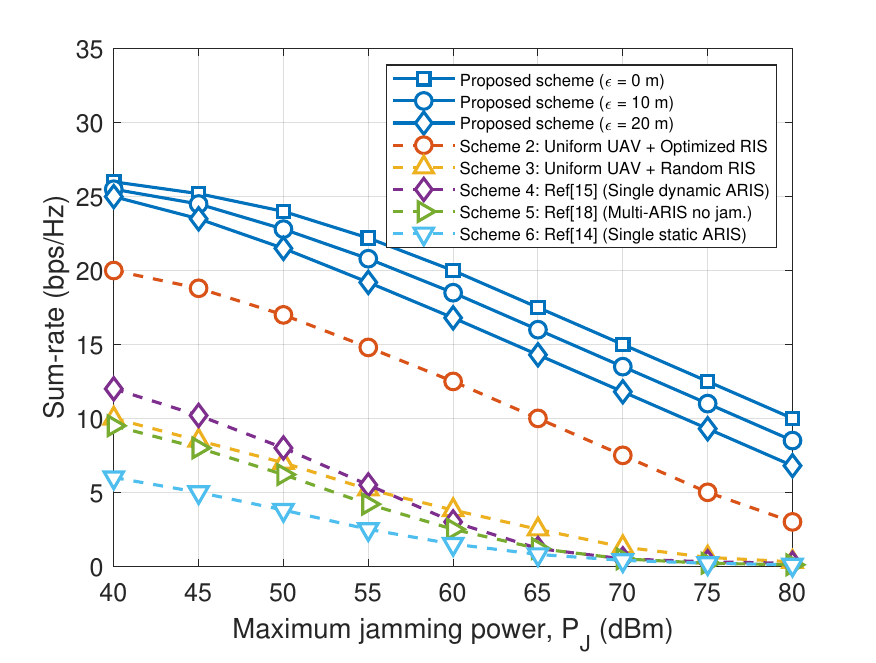} 
	\caption{\raggedright Sum-rate against jamming power under different anti-jamming schemes.} 
	\label{fig:sumrate_vs_pj}
\end{figure}

\begin{figure}[t] 
	\centering
	\includegraphics[width=0.4\textwidth]{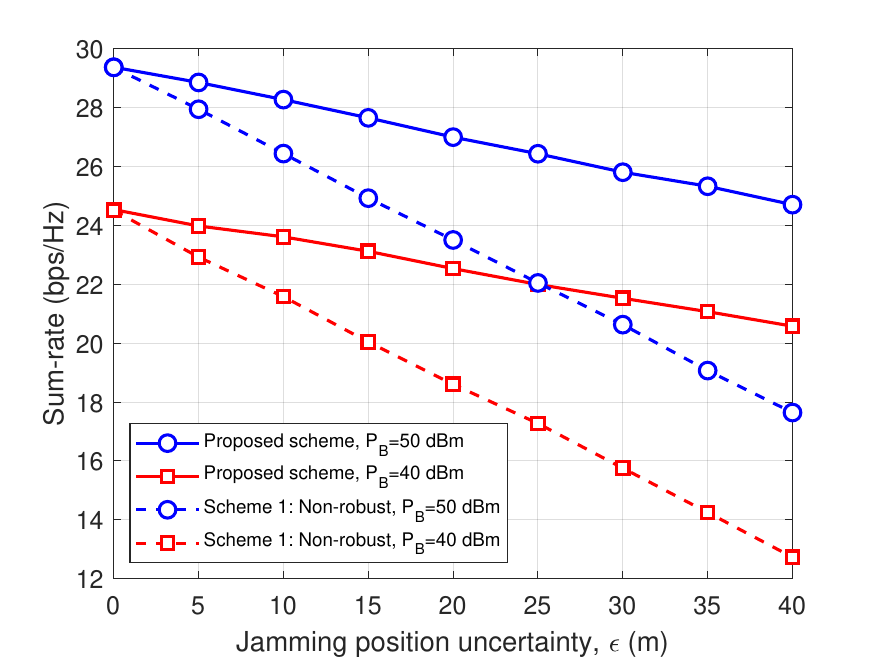} 
	\caption{\raggedright Sum-rate against jamming position error radius.} 
	\label{fig:sumrate_vs_epsilon}
\end{figure}

Fig. \ref{fig:sumrate_vs_epsilon} illustrates the impact of jammer position uncertainty on the sum-rate performance of the proposed robust joint optimization framework and the non-robust version (Scheme 1), which ignores jammer position uncertainty. 
The results demonstrate that the proposed robust scheme consistently outperforms Scheme 1 across all error radii, with the performance gap widening as uncertainty increases.
At $50$ dBm transmit power, the proposed scheme's sum-rate declines from $29.3$ to $24.7$ bps/Hz, while Scheme 1 suffers a more severe degradation from $29.3$ to $17.6$ bps/Hz.
Similarly, at $40$ dBm transmit power, the proposed scheme decreases from $24.5$ to $20.6$ bps/Hz compared to Scheme 1's drop from $24.5$ to $12.7$ bps/Hz.
The robustness of the proposed framework is especially valuable at higher uncertainty levels, where non-robust designs degrades dramatically. 
These results highlight the critical importance of incorporating jammer position uncertainty into the system design, particularly in contested environments where inaccurate jammer location estimation would otherwise lead to significant performance degradation.

\begin{figure}[t] 
	\centering
	\includegraphics[width=0.4\textwidth]{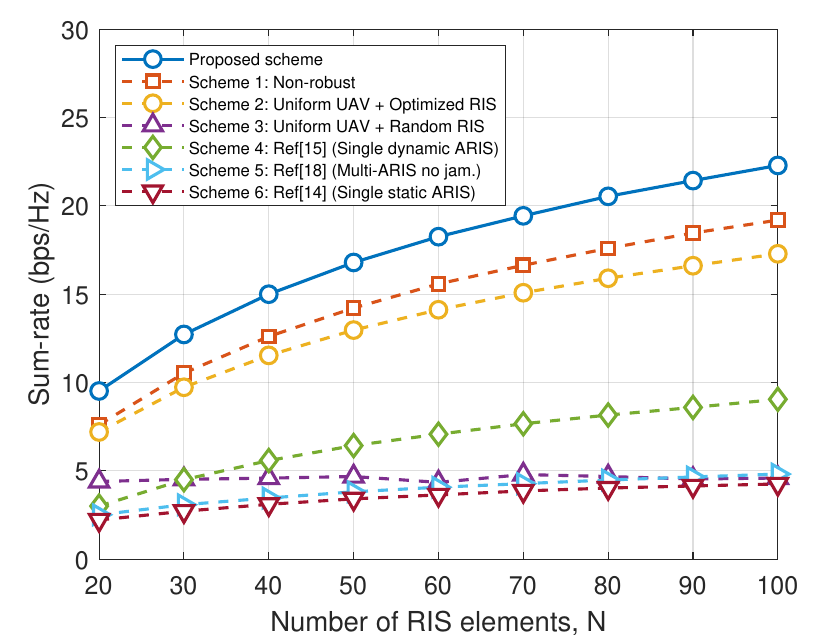} 
	\caption{\raggedright Sum-rate versus number of RIS elements under different anti-jamming schemes.} 
	\label{fig:sumrate_vs_N}
\end{figure}

Fig. \ref{fig:sumrate_vs_N} illustrates the sum-rate performance versus the number of RIS elements for different anti-jamming transmission schemes under a fixed jammer position error of $\epsilon = 15$ m. 
The simulation results demonstrate that the proposed robust joint optimization framework consistently outperforms all other schemes across the entire range of RIS elements, achieving sum-rates from approximately $9.5$ to $22.3$ bps/Hz.
Increasing the number of RIS elements enhances the system's anti-jamming capability by providing higher spatial degrees of freedom for precise beamforming towards legitimate users and creating deeper nulls towards jammers. 
Scheme 1 shows competitive performance but suffers from significant degradation due to its inability to account for jammer uncertainty, while Scheme 2 performs moderately well but fails to match the proposed scheme's performance due to the lack of spatial resource allocation. 
The remaining schemes exhibit substantially lower performance, with Scheme 3 showing minimal improvement with increasing RIS elements, and the reference schemes 4-6 demonstrating limitations in jamming mitigation capability.

\begin{figure}[t] 
	\centering
	\includegraphics[width=0.4\textwidth]{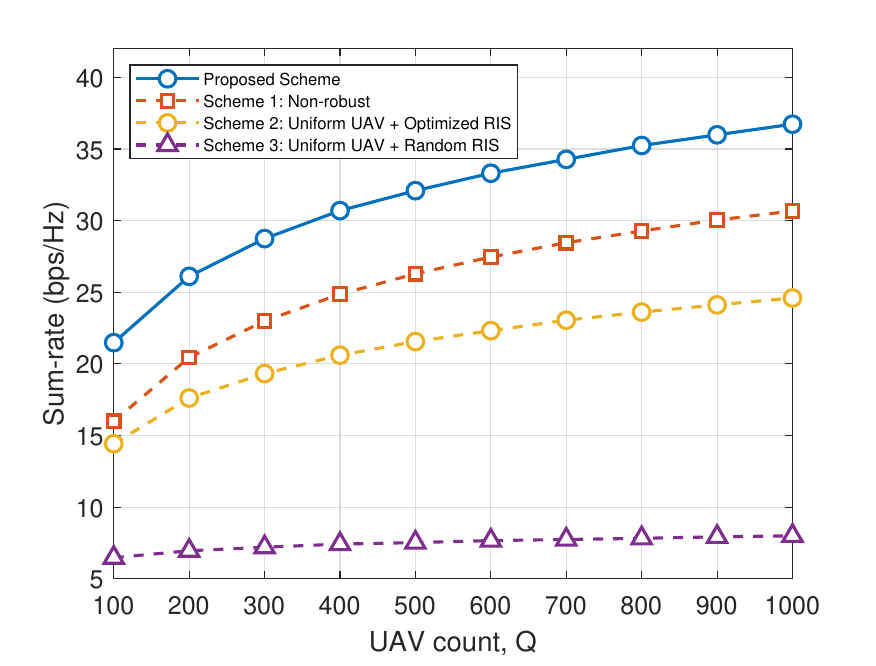} 
	\caption{\raggedright Sum-rate versus number of UAVs under different configurations.} 
	\label{fig:sumrate_vs_Q}
\end{figure}

Fig.\ref{fig:sumrate_vs_Q} illustrates the sum-rate performance versus the number of UAVs for four different anti-jamming transmission schemes under a fixed jammer position error of $\epsilon = 20$ m. 
The simulation results demonstrate that the proposed robust joint optimization framework consistently outperforms all other schemes across the entire range of UAV counts, achieving sum-rates from approximately $21.5$ to $36.7$ bps/Hz.
Scheme 3 shows minimal performance improvement with additional UAVs, emphasizing the critical importance of both optimized phase configuration and strategic spatial deployment.
Increasing the number of UAVs enhances the system's anti-jamming resilience primarily through spatial macro diversity and distributed jamming mitigation. 
A larger swarm of ARISs provides greater spatial degrees of freedom, enabling more flexible and robust coverage reconfiguration. 
This allows the system to strategically position reflecting nodes to create stronger signal pathways to legitimate users while simultaneously avoiding or nullifying jamming directions. 
The proposed DT-ARA algorithm provides significant performance gains in optimizing the UAV spatial distribution, effectively forming a distributed and agile electromagnetic barrier that counters the malicious jammer.

\section{Conclusion}\label{sec:conclusion}
This paper proposed a mean-field-based framework for robust anti-jamming transmission in large-scale ARIS-assisted networks, where UAV spatial density, RIS phase shifts, and BS beamforming are jointly optimized under jammer position uncertainty. The results reveal two important insights. First, optimal ARIS deployment is non-uniform, since the proposed DT-ARA algorithm concentrates drones in high-gain regions while avoiding interference-prone areas. Second, the worst-case optimization formulation is crucial, as it guarantees resilience against an adaptive jammer capable of optimizing both its beamforming and position. Overall, the joint optimization framework achieves significant sum-rate gains over benchmarks, validating the framework's effectiveness for ARIS-assisted anti-jamming communications.

\appendix
\subsection{Proof of Proposition 1}\label{appendix1}
We first consider the interior critical point condition.
Assume that an interior point $\mathbf{j}' \in \mathrm{int}(\mathcal{J})$ satisfies $\nabla_{\mathbf{j}} \lambda_{\max}(\mathbf{R}(\mathbf{j}')) = 0$.
The Karush-Kuhn-Tucker (KKT) conditions for the constrained optimization problem \eqref{jammer_location_prob} require the Lagrangian stationarity condition
\begin{equation}
\nabla_{\mathbf{j}} \lambda_{\max}(\mathbf{R}(\mathbf{j}')) - 2\mu (\mathbf{j}' - \hat{\mathbf{j}}) = \mathbf{0}.
\end{equation}
where $\mu$ is the Lagrangian multiplier.
When $\nabla_{\mathbf{j}} \lambda_{\max}(\mathbf{R}(\mathbf{j}')) = \mathbf{0}$, we have $\mu (\mathbf{j}' - \hat{\mathbf{j}}) = \mathbf{0}$, which implies that either $\mu = 0$, $\mathbf{j}'$ is an unconstrained local maximum; or $\mathbf{j}' = \hat{\mathbf{j}}$, $\hat{\mathbf{j}}$ is a stationary point. The latter case is generally suboptimal, so only unconstrained maxima satisfy the condition.

When no interior critical point exists, the stationarity condition becomes
\begin{equation}
\nabla_{\mathbf{j}} \lambda_{\max}(\mathbf{R}(\mathbf{j})) = 2\mu (\mathbf{j} - \hat{\mathbf{j}})
\end{equation}
which yields $\mathbf{j} - \hat{\mathbf{j}} = \frac{1}{2\mu} \nabla_{\mathbf{j}} \lambda_{\max}(\mathbf{R}(\mathbf{j}))$. 
Enforcing the norm constraint 
\begin{equation}
\left\| \frac{1}{2\mu} \nabla_{\mathbf{j}} \lambda_{\max}(\mathbf{R}(\mathbf{j})) \right\| = \epsilon
\end{equation}
yields $\frac{1}{2\mu} = \frac{\epsilon}{\| \nabla_{\mathbf{j}} \lambda_{\max}(\mathbf{R}(\mathbf{j})) \|}$. Thus, the optimal position is
\begin{equation}
\mathbf{j}_\text{opt} = \hat{\mathbf{j}} + \epsilon \frac{\nabla_{\mathbf{j}} \lambda_{\max}(\mathbf{R}(\mathbf{j}))}{\| \nabla_{\mathbf{j}} \lambda_{\max}(\mathbf{R}(\mathbf{j})) \|}
\end{equation}
which completes the proof.

\subsection{Proof of Proposition 2} \label{appendix2}
With the ZF beamforming, the sum-rate $R_{\textrm{sum}}$ and SINR $\gamma_k$ are given by
\begin{align}
R_{\textrm{sum}} &= \frac{1}{\ln 2} \sum_{k = 1}^{K} \ln(1 + \gamma_k) \\
\gamma_{k} &= \frac{\left|z_{\text{B}, k}\right|^{2}}{c_{k}}
\end{align}
where $z_{\text{B}, k}$ and $c_k$ are given in \eqref{eq:z_Bk} and \eqref{eq:c_k}, respectively.
The aggregate channels defined via spatial integrals are 
\begin{align}
z_{\text{B}, k} &= \int_{\mathcal{D}} \underbrace{\mathbf{h}_{\text{U}, k}^\dagger\left(\mathbf{x}\right) \boldsymbol{\Theta}\left(\mathbf{x}\right) \mathbf{H}_{\text{B, U}}\left(\mathbf{x}\right)}_{\tilde{g}_{\text{B}, k}\left(\mathbf{x}\right)} \mathbf{w}_{k} \rho(\mathbf{x}) \mathrm{d} \mathbf{x} \\
z_{\text{J}, k} &= \int_{\mathcal{D}} \underbrace{\mathbf{h}_{\text{U},k}^\dagger(\mathbf{x}) \boldsymbol{\Theta}(\mathbf{x}) \mathbf{H}_{\text{J,U}}(\mathbf{x},\mathbf{j})}_{\tilde{g}_{\text{J}, k}\left(\mathbf{x}\right)} \mathbf{v} \rho(\mathbf{x}) \mathrm{d}\mathbf{x} 
\end{align}
where $\tilde{g}_{\text{B}, k}\left(\mathbf{x}\right)$ and $\tilde{g}_{\text{J}, k}\left(\mathbf{x}\right)$ are position-dependent signal and interference responses, expressed by
\begin{align}
\tilde{g}_{\text{B}, k}\left(\mathbf{x}\right) &= \sum_{n = 1}^{N} \underbrace{\left[\mathbf{h}_{\text{U}, k}\left(\mathbf{x}\right)\right]_{n}^{*}\left[\mathbf{H}_{\text{B, U}}\left(\mathbf{x}\right) \mathbf{w}_{k}\right]_{n}}_{c_{\text{B}, k, n}\left(\mathbf{x}\right)} \theta_{n}\left(\mathbf{x}\right) \\
\tilde{g}_{\text{J}, k}\left(\mathbf{x}\right) &= \sum_{n = 1}^{N} \underbrace{\left[\mathbf{h}_{\text{U}, k}\left(\mathbf{x}\right)\right]_{n}^{*}\left[\mathbf{H}_{\text{J, U}}\left(\mathbf{x}, j\right) \mathbf{v}\right]_{n}}_{c_{\text{J}, k, n}\left(\mathbf{x}\right)} \theta_{n}\left(\mathbf{x}\right).
\end{align}

The functional derivative of $R_{\textrm{sum}}$ w.r.t. $\theta_n(\mathbf{x})$ is 
\begin{equation}
\frac{\delta R_{\mathrm{sum}}}{\delta \theta_{n}(\mathbf{x})}=\frac{1}{\ln 2} \sum_{k = 1}^{K} \frac{1}{1 + \gamma_{k}} \frac{\delta \gamma_{k}}{\delta \theta_{n}(\mathbf{x})}
\end{equation}

From the expression of $\gamma_k$, we compute its derivative using the quotient rule
\begin{equation}\label{eq:delta_gamma_k}
\frac{\delta \gamma_{k}}{\delta \theta_{n}(\mathbf{x})}=\frac{1}{c_{k}^{2}}\left(c_{k} \frac{\delta\left|z_{\text{B}, k}\right|^{2}}{\delta \theta_{n}(\mathbf{x})}-\left|z_{\text{B}, k}\right|^{2} \frac{\delta c_{k}}{\delta \theta_{n}(\mathbf{x})}\right)
\end{equation}
where
\begin{equation}\label{eq:z_bk}
\frac{\delta\left|z_{\text{B}, k}\right|^{2}}{\delta \theta_{n}(\mathbf{x})} = 2\mathrm{Re}\left(z_{\text{B}, k}^{*} \frac{\delta z_{\text{B}, k}}{\delta \theta_{n}(\mathbf{x})}\right)
\end{equation}
and the functional derivative of $z_{\text{B}, k}$ w.r.t. $\theta_{n}(\mathbf{x})$ is
\begin{align}
\frac{\delta z_{\text{B}, k}}{\delta \theta_{n}(\mathbf{x})} &= \frac{\delta}{\delta \theta_{n}(\mathbf{x})} \int_{\mathcal{D}} \tilde{g}_{\text{B}, k}\left(\mathbf{x}\right) \rho\left(\mathbf{x}\right) \mathrm{d} \mathbf{x} \\
& =c_{\text{B}, k, n}(\mathbf{x}) \rho(\mathbf{x}).
\end{align}

Similarly, the functional derivative of $z_{\text{J}, k}$ is 
\begin{equation}
\frac{\delta z_{\text{J}, k}}{\delta \theta_{n}(\mathbf{x})} = c_{\text{J}, k, n}(\mathbf{x})\rho(\mathbf{x})
\end{equation}
and thus
\begin{equation}\label{eq:c_k_1}
\frac{\delta c_{k}}{\delta \theta_{n}(\mathbf{x})} = 2P_\text{J}\mathrm{Re}\left(z_{\text{J},k}^* c_{\text{J},k,n}(\mathbf{x})\rho(\mathbf{x})\right).
\end{equation}

Substituting \eqref{eq:z_bk}-\eqref{eq:c_k_1} into \eqref{eq:delta_gamma_k} yields
\begin{align}
\frac{\delta \gamma_{k}}{\delta \theta_{n}(\mathbf{x})}= &\frac{2\rho(\mathbf{x})}{c_{k}^{2}} (c_{k} \mathrm{Re}\left(z_{\text{B}, k}^{*} c_{\text{B}, k, n}(\mathbf{x})\right) \notag\\
&-\left|z_{\text{B}, k}\right|^{2} P_\text{J} \mathrm{Re}\left(z_{\text{J}, k}^{*} c_{\text{J}, k, n}(\mathbf{x})\right) ).
\end{align}

Using the definitions of $B_k$ and $C_k$ completes the proof.

\end{document}